\newtheorem{theorem}{Theorem}[section]
\newtheorem{proposition}[theorem]{Proposition}
\newtheorem{corollary}[theorem]{Corollary}
\newtheorem{lemma}[theorem]{Lemma}
\theoremstyle{definition}
\newtheorem{definition}[theorem]{Definition}
\newcommand{\defas}{\ensuremath{\coloneqq}}
\newcommand{\asdef}{\ensuremath{\eqqcolon}}
\title{
    Complexity of equilibria in binary public goods games on undirected graphs%
}
\author{
    Max Klimm\textsuperscript{1}
    \and
    Maximilian J. Stahlberg\textsuperscript{1}
}
\date{
    \medskip
	\small \textsuperscript{1}Technische Universit\"{a}t Berlin, Germany \\
	\small \texttt{$\{$klimm, stahlberg$\}$@math.tu-berlin.de}
}
\begin{document}

\maketitle

\begin{abstract}
    We study the complexity of computing equilibria in binary public goods
    games on undirected graphs. In such a game, players correspond to vertices
    in a graph and face a binary choice of performing an action, or not. Each
    player's decision depends only on the number of neighbors in the graph who
    perform the action and is encoded by a per-player binary pattern. We show
    that games with decreasing patterns (where players only want to act up to a
    threshold number of adjacent players doing so) always have a pure Nash
    equilibrium and that one is reached from any starting profile by following
    a polynomially bounded sequence of best responses. For non-monotonic
    patterns of the form $10^k10^*$ (where players want to act alone or
    alongside $k + 1$ neighbors), we show that it is $\mathsf{NP}$-hard to
    decide whether a pure Nash equilibrium exists. We further investigate a
    generalization of the model that permits ties of varying strength: an edge
    with integral weight $w$ behaves as $w$ parallel edges. While, in this
    model, a pure Nash equilibrium still exists for decreasing patters, we show
    that the task of computing one is $\mathsf{PLS}$-complete.
\end{abstract}

\section{Introduction}

\emph{Public goods} are resources that can be freely accessed and
simultaneously used by many individuals. In the physical world, they comprise
abundant natural resources like sunlight and breathable air alongside
artificial goods such as cultural heritage, public art, or early warning
systems. Public goods have become ubiquitous in the information age, when data
can be reproduced at a negligible cost, yet remains valuable to its users.
Examples from this domain are open source software, public databases, radio
transmissions, and the diffusion of scientific knowledge through open channels.
However, not all public goods are universal: a population warning system
profits a region, herd immunity to a virus is enjoyed on the basis of human
contact, and a scientific manuscript may be legible only to a group of peers.
For these scenarios, the possibility of access may be represented by a graph,
where a vertex can only enjoy goods that are provided by itself or by one of
its neighbors. In the classical setting of \citet{Bramoulle07}, each vertex may
produce any amount of a fixed good and experiences utility based on the total
amount of the good that is available in the closed neighborhood, minus a cost
associated with local production. In the \emph{public goods game}, vertices aim
to configure their own production to maximize the resulting utility. Whether
such strategic decisions allow for a stable arrangement and, if so, how one can
be found and proposed to the actors are natural questions for policy makers.
\citeauthor{Bramoulle07} show for strictly concave benefits and linear costs
the existence of \emph{specialized equilibria}, in which the production of each
vertex is either zero or a fixed, positive amount. This motivates the study of
a binary variant of the game, wherein vertices have only two options: to be
\emph{active} and produce (one unit of) the good, or to remain
inactive~\cite{Gilboa22,Kempe20,Maiti22,Yang20,Yu20}. In general, costs and
utility functions can be defined in a way that allows vertices to be
indifferent to this choice. \citet{Yang20} use this to show that it is
\(\mathsf{NP}\)-hard to decide whether an equilibrium exists, already for a
monotone benefit function common to all agents. In response, \citet{Maiti22}
introduce---and \citet{Gilboa22} shift the focus towards---restricted games in
which every vertex~\(i\) always has a strict preference with respect to its
options. As this preference depends only on the number of active neighbors, we
may encode it by an infinite, indexed-from-zero binary sequence~\(T^i\), called
a \emph{best response pattern}~\cite{Gilboa22}, where \(T^i_\ell\) is the
activity response of~\(i\) to exactly~\(\ell\) active neighbors. When there is
a single pattern common to each vertex, this is called the \emph{fully
homogeneous} case~\cite{Yu20}, otherwise we call the game inhomogeneous.

\subsection{Our results}

We investigate the computational hardness of deciding whether a game admits a
pure Nash equilibrium, for two natural classes of patterns for which this
question was unresolved~\cite{Gilboa22}.

Decreasing patterns are those that start with a positive number of ones and are
zero thereafter. We show that for this class, even the inhomogeneous binary
public goods game is equivalent to a congestion game. We use this equivalence to
show that best-response dynamics converge to a pure Nash equilibrium in
polynomial time. For an extension of the game in which edges have an integral
weight, however, we show that finding an equilibrium is $\mathsf{PLS}$-complete.

We call a picky pattern one where \(T_{\ell} = 1\) if and only if \(\ell \in
\{0, k + 1\}\), for some \(k \geq 1\). Here, vertices wish to act only alone or
alongside \(k + 1\) neighbors. For example, when \(k = 1\) for every vertex,
then the Graph induced by the active vertices of an equilibrium consists of
singletons and cycles. We show that deciding the existence of such an
equilibrium is an $\mathsf{NP}$-complete problem, even in the fully homogeneous
case. A notable corollary is the following: While deciding the infinite
alternating sequence \(T = (1, 0, 1, 0, \ldots)\) is in
$\mathsf{P}$~\cite{Papadimitriou21}, any truncation of this sequence in which
it is zero after alternating at least two times corresponds to a hard problem.

\begin{table}
    \centering
    \begin{tabular}{c l c}
        \toprule
        Complexity & Pattern class & Reference \\
        \midrule
        \(O(1)\)
        & \(1^+0^+\) & \cref{monotone}\(^\dagger\) \\
        \cmidrule{1-1}
        \(O(\operatorname{poly}(n))\)
        & \({(10)}^*\) & \cite[Lemma 3.3]{Papadimitriou21} \\
        \cmidrule{1-1}
        \multirow{4}{*}{\textsf{NP}-complete}
        & \(10^+10^*\) & \cref{picky-pattern} \\
        & \({(10)}^+10^*\) & \cref{truncated-alternating}\(^\ddagger\) \\
        & \(11.^*0.^*10^*\) & \cite[Theorem 5]{Gilboa22} \\
        & \(10^+11.^*0^+\) & \cite[Theorem 6]{Gilboa22} \\
        \bottomrule
    \end{tabular}
    \caption{Complexity of deciding the existence of equilibria in the
        homogeneous binary public goods games on undirected graphs, for varying
        classes of non-trivial (not starting with zero) best response patterns.
        The pattern class notation is defined in \cref{pattern-notation}.
        Informally, \(x^*\)~and~\(x^+\) denote, respectively, zero-or-more and
        one-or-more repetitions of pattern~\(x\) and the dot is a placeholder
        for either \(0\)~or~\(1\) (repeated dots may take distinct values).
        \(^\dagger\)The special case of \(10^*\) has been shown
        in~\cite{Bramoulle07}, that of \(110^*\) in~\cite{Gilboa22}.
        \(^\ddagger\)Uses \cite[Theorem 7]{Gilboa22}.}%
    \label{table}
\end{table}

\Cref{table} shows a summary of our results for homogeneous patterns alongside
previous results.

\subsection{Related work}

Let $G = (V,E)$ be an undirected graph and denote the \emph{open neighborhood}
of vertex $i \in V$ by $N(i) \defas \{ j \in V : \{i,j\} \in E\}$.
\citet{Bramoulle07} study a model where each vertex is a player whose strategy
is to pick a production effort $s_i \in S_i \defas \mathbb{R}_{\geq 0}$. For a
strategy vector $s = (s_1,\dots,s_n)$, the utility of each player is given by
$u_i(s) \defas b(s_i + \sum_{j \in N(i)} s_j) - c s_i$, where $b \colon
\mathbb{R}_{\geq 0} \to \mathbb{R}_{\geq 0}$ is a benefit function with $b(0) =
0$, $b' > 0$ and $b'' < 0$, and where $c > 0$ denotes the cost of production.
The authors show the existence of a particular kind of Nash equilibrium, $s$
with $s_i \in \{0, {(b')}^{-1}(c)\}$, that they call a \emph{specialized
equilibrium}. They translate this result also to the more general case where
$b_i$~and~$c_i$ depend on the player~$i$. \citet{Pandit18} discuss the
efficiency of specialized equilibria and give conditions under which they
attain the maximum social welfare, maximum production effort, and minimum cost
achieved by a general Nash equilibrium.

\citet{Yu20} study a binary variant of the game with $S_i \defas \{0,1\}$ and
where they allow for arbitrary non-decreasing and player-specific~$b_i$, and
for player-specific~$c_i$. They aim to show that in this setting, deciding the
existence of a pure Nash equilibrium is an $\mathsf{NP}$-complete problem.
\citet{Yang20} point out a flaw in the proof and provide a corrected version.
They further extend this result to homogeneous games where all functions~$b_i$
and all parameters~$c_i$ are equal. As noted by \citet{Gilboa22}, the hardness
result of \citeauthor{Yang20} hinges on the fact that some players are
indifferent to their binary decisions, i.e., $b(k+1) - b(k) = c$ for some value
of~$k$. \citeauthor{Gilboa22} exclude this possibility by studying
\emph{best-response patterns}. These are binary sequences that explicitly give
the best response of a player \(i\) based on the number of neighbors \(j \in
N(i)\) with \(s_j = 1\). The authors settle the complexity for a number of
patterns, including some that start with~\(0\), for which they consider
\emph{non-trivial equilibria}. They also show that hard patterns that start
with~\(1\) remain hard when they are prefixed with~\((1, 0)\). See \cref{table}
for previous results on patterns starting with~\(1\). In independent and
concurrent work, \citet{Gilboa23} gives an alternative proof of
\cref{picky-pattern} and settles the complexity of all remaining finite (i.e.,
\(T \in .^*0^+\)) and non-monotone patterns, which complements our results on
monotone patterns to a full characterization of the finite and homogeneous
case. \citet{Maiti22} provide a parametrized complexity perspective wherein
they study natural graph parameters such as the maximum degree and the
treewidth.

\citet{LopezPintado13} initiates the study of public goods games on directed
graphs. The author analyzes a restricted binary model where each agent~$i$
prefers to play $s_i = 1$ if and only if none of their in-neighbors~$j$ plays
$s_j = 1$. \citet{Papadimitriou21} perform a complexity analysis of the binary
variant on directed graphs for more general utility functions. They investigate
in particular the setting where all players share a utility function that is
monotone in the number of in-neighbors~$j$ who play $s_j = 1$. This allows for
a pattern-wise analysis akin to that of \citet{Gilboa22}. Here the authors
provide a complete characterization by showing that only the trivial all-zero
and all-ones patterns as well as the infinite alternating sequence starting
with~\(1\) are polynomial-time decidable. They further provide a
$\mathsf{PPAD}$-hardness result for the computation of approximate mixed Nash
equilibria.

\citet{Kempe20} are interested in modifying the graph in such a way that one
element of a given set of strategy vectors is a pure Nash equilibrium of the
binary game. The motivation behind this is to enforce a socially preferable
equilibrium through the intervention of a policy maker. Finally,
\citet{Galeotti10} study a variant of the game where players have only partial
knowledge of the network structure.

\section{Preliminaries}

For \(n \in \mathbb{Z}_{\geq 1}\) we write \([n]\) short for \(\{1, \ldots,
n\}\). For a set \(A\), we write \(\mathbf{1}_A\) for the indicator function of
\(A\), that is \(\mathbf{1}_A(x) \defas 1\), if \(x \in A\), and
\(\mathbf{1}_A(x) \defas 0\), if \(x \not\in A\). For an undirected and simple
graph \(G = (V, E)\), we write \(\Delta(G)\) for its maximum vertex degree,
\(\deg(v)\) for the degree of \(v \in V\), \(G[X]\) for the subgraph induced by
\(X \subseteq V\), \(N_G(v)\) for the open neighborhood of \(v \in V\), and
\(N_G(X) \defas \{v \in V \setminus X \mid \exists x \in X \colon \{v, x\} \in
E\}\) for the open neighborhood of \(X \subseteq V\). We drop suffixes when the
graph in question is clear.

When \(s \in {\{0, 1\}}^n\) with \(n = |V|\) denotes a strategy profile of the
binary public goods game played on \(G = (V, E)\) with best-response pattern
\(T\), then we say that a vertex \(v \in V\) is \emph{active (in \(s\))}, if
\(s_v = 1\), and \emph{inactive}, if \(s_v = 0\). We further write \(\deg_s(v)
\defas |\{u \in N_G(v) \mid s_u = 1\}|\) and call this the \emph{active degree}
of \(v\). Finally, we say that a vertex \(v \in V\) is \emph{best-responding
(in~\(s\))} if \(s_v = T_{\deg_s(v)}\).

\subsection{Response pattern notation}\label{pattern-notation}

We use regular expressions (REs) over the alphabet \(A \defas \{0,1\}\) to
describe classes of best response patterns. We first define their syntax: Every
\(c \in A\) and the dot \(.\) are RE.\@ If \(e\)~and~\(f\) are REs, then so are
the concatenation \(ef\), the \(k\)-fold concatenation of \(e\) with itself,
\({(e)}^k\), the zero-or-more operation~\({(e)}^*\), and the one-or-more
operation~\({(e)}^+\). Parentheses may be omitted: superscripts then take
precedence over concatenation, for example \(10^* = 1{(0)}^* \neq {(10)}^*\).

The language \(L(e)\) generated by an RE \(e\) is the following: For the dot we
have \(L(.) \defas \{(a) \mid a \in A\}\). For \(c \in A\), it is \(L(c) \defas
\{(c)\}\). For REs \(e\)~and~\(f\), it is \(L(ef) \defas L(e) \times L(f)\),
\(L(e^k) \defas {L(e)}^k\), \(L(e^+) \defas \bigcup_{i=1}^\infty {L(e)}^i\) and
\(L(e^*) \defas L(e^+) \cup \{()\}\), where \(()\) is the empty sequence over
\(A\) and where \((c) = c\) is a sequence of length one.

\emph{Extensible} REs end in \({(e)}^*\) or \({(e)}^+\) for some sub-expression
\(e\). For the set of best response patterns \(P(e)\) generated by an
extensible RE~\(e\), we have \({(x)}_{i=1}^\infty \in P(e)\) if and only if
there is a threshold~\(N\) such that \({(x)}_{i=1}^n \in L(e)\) for all \(n
\geq N\). For example, \(P(1^*0^+) = P(1^*00^*)\) contains all infinite
sequences that start with a nonnegative, finite number of ones and are zero
thereafter, while \(P(1^*0^*)\) contains in addition the infinite sequence of
ones. In the following, we write \(T \in e\) short for \(T \in P(e)\). We also
write \(T = e\) instead of \(T \in e\) to signify that \(|P(e)| = 1\).

\subsection{Best-response games, strategic games, and consistent
    representations}

In the following we make explicit the relation between the binary public goods
game that is specified by the players' best-response behaviors and the
classical definition where players associate a utility value with their two
choices. This work adopts the former representation as a \emph{best-response
game} that captures precisely the \emph{strict}~\cite{Maiti22} public goods
games on graphs. To still leverage established notions of game equivalence that
are based on utility functions, we introduce the notion of a \emph{consistent
representation} of a best-response game as a \emph{strategic game} in which
players experience utility. We start the introduction of these concepts with
the class of best-response games:

\begin{definition}[Best-response game]
    A best-response game (BRG) is described by a triple \(\left(n,
    {(S_i)}_{i=1}^n, {(\beta_i)}_{i=1}^n\right)\) where \(n\) is the number of
    players, \(S_i\) is the strategy set of player \(i \in [n]\), and \(\beta_i
    \colon \prod_{j=1}^{i-1} S_j \times \prod_{j=i+1}^{n} S_j \to S_i\) is a
    function such that \(\beta_i(s_{-i})\) denotes the best response of player
    \(i\) given that the other players play \(s_{-i}\).

    For a strategy profile \(s \in \prod_{j=1}^n S_j\) and \(i \in [n]\), we
    write \(s_{-i}\) short for \(\prod_{j=1}^{i-1} \{s_j\} \times
    \prod_{j=i+1}^{n} \{s_j\}\) and we call \(s\) a \emph{pure Nash
    equilibrium} (PNE) if \(\beta_i(s_{-i}) = s_i\) for all \(i \in [n]\).
\end{definition}

It is straightforward to formalize the binary public goods game as a BRG.\@

\begin{definition}[Binary public goods game]\label{pgg}
    For an undirected simple graph \(G = (V, E)\) with \(n\) vertices \(V =
    [n]\) and for best-response patterns \(T^i =
    {(\tau^i_\ell)}_{\ell=0}^\infty\) for all \(i \in V\), we call the
    best-response game \(\left(n, {(\{0, 1\})}_{i=1}^n,
    {(\beta_i)}_{i=1}^n\right)\) with \(\beta_i(s_{-i}) \defas
    T^i_{x_i(s_{-i})}\) and \(x_i(s_{-i}) \defas \sum_{j \in N_G(i)} {s_{j}}\)
    a (binary) public goods game (PGG). We also refer to just \(G\) and the
    \(T^i\) as a PGG.\@

    If \(T^i = T\) for all \(i \in V\) and for some pattern \(T\), then we
    refer to the associated PGG as a \emph{homogeneous PGG with pattern \(T\)}.
\end{definition}

While in a best-response game a player's response is defined uniquely, in
strategic games players choose a strategy among all that maximize a utility
function:

\begin{definition}[Strategic game]
    A strategic game is a triple \(\left(n, {(S_i)}_{i=1}^n,
    {(u_i)}_{i=1}^n\right)\) where \(n\) is the number of players, \(S_i\) is
    the strategy set of player \(i \in [n]\), and \(u_i \colon \prod_{j=1}^n
    S_j \to \mathbb{Q}\) is a function such that \(u_i(s)\) denotes the utility
    experienced by player \(i \in [n]\) when the strategy profile \(s \in
    \prod_{j=1}^n S_j\) is played.

    For a strategy profile \(s \in \prod_{j=1}^n S_j\), \(i \in [n]\), and \(a
    \in S_i\), we write \((s_{-i},a)\) short for \(\prod_{j=1}^{i-1} \{s_j\}
    \times \{a\} \times \prod_{j=i+1}^{n} \{s_j\}\). We call a strategy profile
    \(s \in \prod_{j=1}^n S_j\) with \(u_i(s) \geq u_i(s_{-i},a)\) for all \(i
    \in [n]\) and all \(a \in S_i\) a \emph{pure Nash equilibrium} (PNE).
\end{definition}

With a suited utility function, we can represent any BRG as a strategic game
with equivalent best-response dynamics (cf.~\cite{Maiti22}):

\begin{definition}[Consistent representation]\label{conrep}
    Let \(A = \left(n, {(S_i)}_{i=1}^n, {(\beta_i)}_{i=1}^n\right)\) be a
    BRG.\@ We call a strategic game \(B = \left(n, {(S_i)}_{i=1}^n,
    {(u_i)}_{i=1}^n\right)\) a \emph{consistent representation} of \(A\) if for
    all \(s \in \prod_{j=1}^n S_j\) and all \(i \in [n]\), it is
    \(u_i(s_{-i},a) > u_i(s_{-i},a')\) for all \(a' \in S_i \setminus \{a\}\)
    where \(a \defas \beta_i(s_{-i})\) is the best response of \(i\) in \(A\).
\end{definition}

Consistent representations are closely related to the notion of a \emph{weak
    isomorphism} between strategic games~\cite{Gabarro11}. In particular, they
also preserve the structure of pure Nash equilibria:

\begin{lemma}
    Let \(A = \left(n, {(S_i)}_{i=1}^n, {(\beta_i)}_{i=1}^n\right)\) be a BRG
    and \(B = \left(n, {(S_i)}_{i=1}^n, {(u_i)}_{i=1}^n\right)\) a consistent
    representation of \(A\). Then, \(s \in \prod_{j=1}^n S_j\) is a PNE of
    \(A\) if and only if \(s\) is a PNE of \(B\).
\end{lemma}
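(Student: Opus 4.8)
The plan is to unfold the two notions of pure Nash equilibrium and bridge them with the single inequality that defines a consistent representation. Fix a strategy profile $s \in \prod_{j=1}^n S_j$ and, for each player $i \in [n]$, abbreviate its best response in $A$ by $a^*_i \defas \beta_i(s_{-i})$. By \cref{conrep}, we then have $u_i(s_{-i}, a^*_i) > u_i(s_{-i}, a')$ for every $a' \in S_i \setminus \{a^*_i\}$; this is the only property of $B$ I intend to use.

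For the forward implication, suppose $s$ is a PNE of $A$, i.e.\ $s_i = a^*_i$ for every $i$. To verify the PNE condition of $B$, I would fix $i$ and an arbitrary $a \in S_i$ and distinguish two cases: if $a = s_i = a^*_i$ then $u_i(s) = u_i(s_{-i}, a)$ trivially, and if $a \neq a^*_i$ then the consistency inequality gives $u_i(s) = u_i(s_{-i}, a^*_i) > u_i(s_{-i}, a)$. In either case $u_i(s) \geq u_i(s_{-i}, a)$, so $s$ is a PNE of $B$. For the converse I would argue by contraposition: if $s$ is not a PNE of $A$, there is a player $i$ with $s_i \neq a^*_i$, and instantiating the consistency inequality at $a' = s_i$ yields $u_i(s_{-i}, a^*_i) > u_i(s_{-i}, s_i) = u_i(s)$, exhibiting a profitable deviation and hence showing that $s$ is not a PNE of $B$ either.

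The whole argument is a direct consequence of the definitions, so I do not anticipate a genuine obstacle; the only point that needs care is the bookkeeping around \cref{conrep} — namely that its inequality is \emph{strict} and that its quantifier explicitly excludes $a^*_i$. The strictness is what makes the converse go through (a non-best response is strictly worse, hence a deviation), while separating off the case $a = s_i$ is what makes the forward direction go through (there the inequality is an equality). Beyond this, nothing further is required.
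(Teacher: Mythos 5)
Your proposal is correct and follows essentially the same route as the paper: the forward direction via the case split on $a = s_i$ versus $a \neq \beta_i(s_{-i})$ and the strict consistency inequality, and the converse by exhibiting the best response as a strictly profitable deviation (the paper phrases this as a contradiction rather than a contraposition, which is an immaterial difference). No gaps.
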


\begin{proof}
    Let first \(s\) be a PNE of \(A\) and consider a player \(i \in [n]\) of
    \(A\) and their best response \(a \defas \beta_i(s_{-i})\). Since \(A\) is
    a PNE, it is \(a = s_i\). As further B is a consistent representation,
    \begin{align*}
        u_i(s) = u_i(s_{-i},s_i) = u_i(s_{-i},a) > u_i(s_{-i},a')
    \end{align*}
    for all \(a' \in S_i \setminus \{a\}\). Thus, \(u_i(s) \geq
    u_i(s_{-i},a'')\) for all \(a'' \in S_i\), so \(s\) is a PNE of \(B\).

    Let next \(s\) be a PNE of \(B\) and consider a player \(i \in [n]\) of
    \(B\) playing \(a \defas s_i\). Assume towards a contradiction that \(s_i
    \neq \beta_i(s_{-i}) \asdef b\). Since \(B\) is a PNE, \(u_i(s_{-i},a) =
    u_i(s) \geq u_i(s_{-i},a')\) for all \(a' \in S_i\). From \(b \in S_i\), it
    follows in particular that \(u_i(s_{-i},a) \geq u_i(s_{-i},b)\). As \(B\)
    is a consistent representation and \(a \neq b\), \(u_i(s_{-i},b) >
    u_i(s_{-i},a)\), a contradiction. Hence, \(s_i = \beta_i(s_{-i})\), so
    \(s\) is a PNE of \(A\).
\end{proof}

We are further interested in the dynamics that may lead to a PNE:\@

\begin{definition}[Better-response sequence]
    For a best-response game \(\left(n, {(S_i)}_{i=1}^n,
    {(\beta_i)}_{i=1}^n\right)\), a better-response sequence is a sequence of
    strategy profiles \({(s^i)}_{i=1}^N\) such that for all \(i \in [N - 1]\),
    it is \(s^{i+1} = (s^i_{-j},a)\) for some \(j \in [n]\) and \(a \in S_j\)
    with \(\beta_j(s^i_{-j}) = a\) and \(a \neq s^i_j\).

    For a strategic game \(\left(n, {(S_i)}_{i=1}^n, {(u_i)}_{i=1}^n\right)\),
    a better-response sequence is a sequence of strategy profiles
    \({(s^i)}_{i=1}^N\) such that for all \(i \in [N - 1]\), it is \(s^{i+1} =
    (s^i_{-j},a)\) for some \(j \in [n]\) and \(a \in S_j\) such that
    \(u_j(s^i_{-j},a) > u_j(s^i)\).
\end{definition}

It is immediate from \cref{conrep} that consistent representations preserve
better-response sequences:

\begin{lemma}
    If \(A\) is a BRG and \(B\) a consistent representation of \(A\), then any
    better-response sequence in \(A\) is also a better-response sequence in
    \(B\).
\end{lemma}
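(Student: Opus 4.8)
The plan is to verify, transition by transition, that any better-response sequence of \(A\) satisfies the defining property of a better-response sequence of \(B\). So let \(A = \left(n, {(S_i)}_{i=1}^n, {(\beta_i)}_{i=1}^n\right)\), let \(B = \left(n, {(S_i)}_{i=1}^n, {(u_i)}_{i=1}^n\right)\) be a consistent representation of \(A\), and let \({(s^i)}_{i=1}^N\) be a better-response sequence in \(A\). Fix an arbitrary index \(i \in [N-1]\). By the definition of a better-response sequence in a BRG, there are a player \(j \in [n]\) and a strategy \(a \in S_j\) with \(\beta_j(s^i_{-j}) = a\), \(a \neq s^i_j\), and \(s^{i+1} = (s^i_{-j}, a)\).

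The key observation is that the current profile can be written as \(s^i = (s^i_{-j}, s^i_j)\), and that \(s^i_j \in S_j \setminus \{a\}\) since \(a \neq s^i_j\). Because \(a = \beta_j(s^i_{-j})\) is by definition the best response of player \(j\) against \(s^i_{-j}\) in \(A\), and because \(B\) is a consistent representation of \(A\), \cref{conrep} gives \(u_j(s^i_{-j}, a) > u_j(s^i_{-j}, a')\) for every \(a' \in S_j \setminus \{a\}\). Instantiating this inequality at \(a' = s^i_j\) yields \(u_j(s^i_{-j}, a) > u_j(s^i_{-j}, s^i_j) = u_j(s^i)\), which is exactly the condition required for the transition from \(s^i\) to \(s^{i+1} = (s^i_{-j}, a)\) to be a valid step of a better-response sequence in the strategic game \(B\). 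Since \(i \in [N-1]\) was arbitrary, \({(s^i)}_{i=1}^N\) is a better-response sequence in \(B\).

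I do not anticipate any genuine obstacle: the statement is an immediate unfolding of \cref{conrep}. The only point that warrants a moment of care is notational bookkeeping — one must recognise the current profile \(s^i\) as \((s^i_{-j}, s^i_j)\), so that the consistency inequality, which compares the best response \(a\) against \emph{every} other strategy of player \(j\), can be applied to the particular strategy \(s^i_j\) that \(j\) abandons in this step. Combined with the preceding lemma on preservation of pure Nash equilibria, this gives that any claim about convergence of better-response dynamics to a PNE in \(A\) transfers verbatim to \(B\); this is the property we will later use to reason about dynamics through consistent representations.
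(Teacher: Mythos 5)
Your proof is correct and is precisely the unfolding that the paper has in mind: the paper gives no explicit argument, stating only that the claim ``is immediate from \cref{conrep},'' and your instantiation of the consistency inequality at \(a' = s^i_j\) is exactly the one-line justification behind that remark. No issues.
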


Note that the converse does not hold: While the better response of a player in a
best-response game is unique, there may be multiple unilateral strategy changes
that improve the utility of the deviating player in a strategic game.

\subsection{Game isomorphisms}

We will make use of the following notion of equivalence of strategic games:

\begin{definition}[Strong isomorphism \citep{Gabarro11}]%
    \label{isomorphism}
    Two strategic games \(A = \left(n, {(S_i)}_{i=1}^n,
    {(u_i)}_{i=1}^n\right)\) and \(B = \left(n, {(S'_i)}_{i=1}^n,
    {(u'_i)}_{i=1}^n\right)\) are called \emph{(strongly) isomorphic} when
    there are bijections \(\pi \colon [n] \to [n]\) and \(\phi_i \colon S_i \to
    S'_{\pi(i)}\) for all \(i \in [n]\) such that for all players \(i \in [n]\)
    (of \(A\)) and strategy profiles \(s \in \prod_{j=1}^n S_j\), it is
    \(u_i(s) = u'_{\pi(i)}\left(\phi(s)\right)\) where \(\phi(s) \defas {\left(
    \phi_{\pi^{-1}(j)}(s_{\pi^{-1}(j)}) \right)}_{j=1}^n\).
\end{definition}

Note that for \(\pi = \operatorname{id}_{[n]}\), it is \(\phi(s) = {\left(
\phi_j(s_j) \right)}_{j=1}^n\). Informally, an isomorphism describes a
one-to-one mapping between the players and strategy profiles of two games that
preserves the players' utility and, by extension, the players' strategy
preferences and the PNE structure of both games. In particular:

\begin{lemma}\label{isomorphisms-preserve-pne}
    If two games \(A\) and \(B\) are isomorphic and \(A\) admits a PNE, then so
    does \(B\).
\end{lemma}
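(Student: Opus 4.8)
The plan is to take a PNE \(s\) of \(A\), transport it through the isomorphism, and verify that the resulting profile \(\phi(s)\) is a PNE of \(B\). Fix bijections \(\pi \colon [n] \to [n]\) and \(\phi_i \colon S_i \to S'_{\pi(i)}\) witnessing the isomorphism, so that \(u_i(t) = u'_{\pi(i)}(\phi(t))\) for every player \(i\) of \(A\) and every profile \(t\). To show \(\phi(s)\) is a PNE of \(B\), fix an arbitrary player \(j \in [n]\) of \(B\) and an arbitrary alternative strategy \(a' \in S'_j\); writing \(i \defas \pi^{-1}(j)\), surjectivity of \(\phi_i \colon S_i \to S'_{\pi(i)} = S'_j\) yields some \(a \in S_i\) with \(\phi_i(a) = a'\).

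The one step that needs care is checking that \(\phi\) commutes with unilateral deviations, i.e.\ \(\phi\big((s_{-i}, a)\big) = \big(\phi(s)_{-j}, a'\big)\). This is a direct unfolding of \(\phi(t) = \big(\phi_{\pi^{-1}(k)}(t_{\pi^{-1}(k)})\big)_{k=1}^n\): for \(k = j\) the \(k\)-th coordinate of \(\phi\big((s_{-i},a)\big)\) is \(\phi_{\pi^{-1}(j)}\) applied to the \(i\)-th coordinate of \((s_{-i},a)\), which is \(\phi_i(a) = a'\); and for \(k \neq j\) we have \(\pi^{-1}(k) \neq i\), so the \(k\)-th coordinate is \(\phi_{\pi^{-1}(k)}(s_{\pi^{-1}(k)})\), matching the \(k\)-th coordinate of \(\phi(s)\).

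With this identity in hand, the defining equation of a strong isomorphism gives \(u'_j(\phi(s)) = u_i(s)\) and \(u'_j\big(\phi(s)_{-j}, a'\big) = u'_j\big(\phi((s_{-i},a))\big) = u_i(s_{-i},a)\). Since \(s\) is a PNE of \(A\), we have \(u_i(s) \geq u_i(s_{-i},a)\), hence \(u'_j(\phi(s)) \geq u'_j\big(\phi(s)_{-j}, a'\big)\). As \(j\) and \(a'\) were arbitrary, \(\phi(s)\) is a PNE of \(B\). I expect no real obstacle beyond the notational bookkeeping in the commutation step, since the coordinate permutation inside \(\phi\) interacts with the \((\cdot)_{-i}\) notation in a way that is easy to write down slightly wrong; the argument itself is routine.
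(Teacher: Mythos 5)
Your proof is correct and follows essentially the same route as the paper: transport the equilibrium profile through \(\phi\) and use surjectivity of \(\pi\) and the \(\phi_i\) to match every unilateral deviation in \(B\) with one in \(A\). You are in fact more careful than the paper's own (rather terse) argument, since you explicitly verify that \(\phi\) commutes with unilateral deviations, the step the paper glosses over.
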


\begin{proof}
    If \(A = \left(n, {(S_i)}_{i=1}^n, {(u_i)}_{i=1}^n\right)\) has a PNE \(s
    \in \prod_{i=1}^n S_i\), then \(u_i(t) \leq u_i(s)\) for all \(i \in [n]\)
    and \(t \in \prod_{j=1}^n S_j\). As \(A\) and \(B\) are isomorphic, there
    exist bijections \(\pi\) and \(\phi\) such that
    \begin{align*}
        u'_{\pi(i)}\left(\phi(t)\right)
        =
        u_i(t)
        \leq
        u_i(s)
        =
        u'_{\pi(i)}\left(\phi(s)\right)
    \end{align*}
    for all \(i\) and \(t\) as above. As both \(\pi\) and \(\phi\) are
    surjective, \(s'\) is a PNE of \(B\).
\end{proof}

\subsection{Encoding}

For the homogeneous PGG, we assume in line with previous work that the common
best response pattern is part of the problem definition and not of the problem
input. For the inhomogeneous variant, this approach is not adequate as each
player might have a distinct pattern, so here we assume that the patterns are
part of the input and give the encoding explicitly.

\section{Existence of equilibria for decreasing patterns}

Decreasing best-response patterns are those that begin with a positive number
of ones and are zero thereafter. They arise naturally in scenarios where public
goods have a constant cost to produce and provide diminishing returns (to the
producer and its neighbors alike), expressed through a concave utility
function. We show that for this family of patterns, formally \(P(1^+0^+)\), the
public goods game is equivalent to a \emph{congestion game}:

\begin{definition}[Congestion game]
    Let \(E\) be a set of goods equipped with a delay function \(d_e \colon
    \mathbb{Z}_{\geq 1} \to \mathbb{Q}\) for every \(e \in E\). Then, we call a
    strategic game \(\left(n, {(S_i)}_{i=1}^n, {(u_i)}_{i=1}^n\right)\) with
    \(S_i \subseteq E\) and \(u_i(s) = -\sum_{e \in s_i} d_e(x_s(e))\) with
    \(x_s(e) \defas \sum_{j=1}^n \mathbf{1}_{s_j}(e)\) for all \(i \in [n]\) a
    congestion game.
\end{definition}

More precisely, we show equivalence in the following sense:

\begin{proposition}\label{isomorphic}
    The binary public goods game on undirected graphs with best-response
    patterns \(T^i \in 1^+0^+\), \(i \in V\), has a consistent representation
    that is isomorphic to a congestion game.
\end{proposition}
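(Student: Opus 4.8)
The plan is to construct the consistent representation $B$ directly on the strategy sets $\{0,1\}$ (as \cref{conrep} requires), and then to exhibit a congestion game $C$ together with a strong isomorphism $B \cong C$. For each $i \in V$, write $k_i \ge 1$ for the number of leading ones of $T^i$, so that $T^i = 1^{k_i}0^+$ and $T^i_\ell = 1$ exactly when $\ell < k_i$. The guiding observation is that the cost of being active while having $m$ active neighbors should be \emph{affine and increasing in $m$}, which is produced by one good shared per incident edge, whereas the per-player threshold $k_i$ is produced by a private good.

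Concretely, I would take the good set $E_C \defas \{p_i : i \in V\} \cup \{g_e : e \in E\}$, where $p_i$ is private to player $i$ and $g_e$ is shared by the two endpoints of $e$. Player $i$ gets two congestion-game strategies: an \emph{inactive} strategy $\{p_i\}$ and an \emph{active} strategy $\{g_e : e \in E,\ i \in e\}$. Set $d_{p_i}(1) \defas k_i - \tfrac12$, set $d_{g_e}(1) \defas 0$ and $d_{g_e}(2) \defas 1$, and assign the value $0$ to all congestion levels that can never occur (levels $\ge 2$ for $p_i$, levels $\ge 3$ for $g_e$); all these values are rational because $k_i \in \mathbb{Z}$. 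In any profile of this game, a private good carries congestion exactly $1$, and $g_{\{i,j\}}$ carries congestion $2$ iff both $i$ and $j$ play their active strategy. Hence, if exactly $m$ neighbors of $i$ play active, then $i$ receives utility $-(k_i - \tfrac12)$ when playing inactive and $-m$ when playing active (edges to active neighbors contribute delay $1$, edges to inactive neighbors contribute delay $0$). This is a congestion game in the stated sense; call it $C$.

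Now let $B$ be the strategic game on players $[n]$ with strategy sets $\{0,1\}$ and utilities transported along the relabeling $\phi_i \colon \{0,1\} \to S_i^C$, $\phi_i(0) \defas \{p_i\}$, $\phi_i(1) \defas \{g_e : i \in e\}$, i.e.\ $u_i^B(s) \defas u_i^C(\phi(s))$. Then $\bigl(\operatorname{id}_{[n]}, {(\phi_i)}_{i=1}^n\bigr)$ is a strong isomorphism $B \to C$ in the sense of \cref{isomorphism}: each $\phi_i$ is a bijection, since the inactive and active strategies are distinct even when $\deg(i) = 0$, and $u_i^B = u_i^C \circ \phi$ by definition. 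It remains to verify that $B$ is a consistent representation of the PGG. For a profile $s$ with $x_i(s_{-i}) = m$, the above computation gives $u_i^B(s_{-i},1) = -m$ and $u_i^B(s_{-i},0) = -(k_i - \tfrac12)$; thus $u_i^B(s_{-i},1) > u_i^B(s_{-i},0)$ iff $m < k_i - \tfrac12$ iff $m \le k_i - 1$ iff $T^i_m = 1$, and likewise $u_i^B(s_{-i},0) > u_i^B(s_{-i},1)$ iff $m \ge k_i$ iff $T^i_m = 0$. In both cases the best response $\beta_i(s_{-i}) = T^i_{x_i(s_{-i})}$ is the strict utility maximizer among $\{0,1\}$, which is exactly the condition of \cref{conrep}.

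The only genuinely delicate point is the half-integer offset in $d_{p_i}$: it guarantees a \emph{strict} utility gap at the threshold (both at $m = k_i - 1$ where the best response is $1$ and at $m = k_i$ where it is $0$), which is what consistency demands, and because $d_{p_i}$ does not depend on $\deg(i)$ the argument works uniformly for inhomogeneous patterns. Beyond this, the work is bookkeeping: checking that $C$ satisfies every clause of the congestion-game definition and that $\phi$ is the required bijection of strategy sets. I do not expect any obstacle of substance.
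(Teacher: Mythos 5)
Your construction is the same as the paper's: goods $V \cup E$, an inactive strategy consisting of a private vertex-good with constant delay $k_i - \tfrac12$ and an active strategy consisting of all incident edge-goods with delay $x-1$, and the identical verification of strict consistency at the threshold. The only (immaterial) difference is presentational: you define the $\{0,1\}$-game as the pullback of the congestion game along $\phi$, making the isomorphism hold by definition, whereas the paper writes down the utilities $-(k_i-\tfrac12)$ and $-\deg_s(i)$ first and then verifies the isomorphism by computation.
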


In the proof we construct a congestion game as follows: Every edge and every
vertex of the public goods game \(G = (V, E)\) is represented by a congestible
good in the congestion game \(C\). A player \(i \in V\) remaining inactive in
\(G\) is represented by the associated player \(i \in [n]\) choosing the
singleton strategy \(\{i\}\) in \(C\), which has a constant cost just below the
number of ones in the player's pattern, \(k_i\). The alternative strategy of
\(i\) in \(C\) corresponds to \(i\) being active in \(G\) and amounts to
selecting the set of all edges incident to \(i\) in \(G\). The cost of this
strategy in \(C\) is the number of edges that are used by both players who have
the edge in their strategy. The idea is that players will strictly prefer the
``edge'' strategy as long as less than \(k_i\) of its neighbors do so, and the
``vertex'' strategy otherwise.

\begin{proof}[Proof of \Cref{isomorphic}]
    Let \(G = (V, E)\) be an instance of the PGG with patterns \(T^i =
    1^{k_i}0^*\), \(k_i \geq 1\), for all \(i \in V = [n]\). Define the
    strategic game \(\Gamma \defas \left(n, {(S_i)}_{i=1}^n,
    {(u_i)}_{i=1}^n\right)\) with
    \begin{align}
        S_i    & \defas \{0, 1\}~\text{and} \notag \\
        u_i(s) & \defas
        \begin{cases}
            -(k_i - \frac{1}{2}), & \text{if}~s_i = 0, \\
            -\deg_s(i),           & \text{if}~s_i = 1, \\
        \end{cases} \label{eq_utility}
    \end{align}
    for all \(i \in V\).

    We first show that \(\Gamma\) is a consistent representation of \(G\). Let
    \(s \in {\{0,1\}}^n\) be a strategy profile and \(i \in V\) a player of
    \(G\). Let further \(a \defas \beta_i(s_{-i})\) be the best response of
    \(i\) and \(a' \defas 1 - a\) its unique alternative. We show that then
    \(u_i(s_{-i},a) > u_i(s_{-i},a')\). If \(a = 0\), then \(\deg_s(i) \geq
    k_i\), thus
    \begin{align*}
        u_i(s_{-i},a)
        = -\left(k_i - \frac{1}{2}\right)
        > -k_i
        \geq -\deg_s(i)
        = u_i(s_{-i},a').
    \end{align*}
    On the other hand, if \(a = 1\), then \(\deg_s(i) \leq k_i - 1\) and
    \begin{align*}
        u_i(s_{-i},a)
        = -\deg_s(i)
        \geq -(k_i - 1)
        > -\left(k_i - \frac{1}{2}\right)
        = u_i(s_{-i},a').
    \end{align*}

    Consider next the congestion game \(C = \left(n, {(S'_i)}_{i=1}^n,
    {(u'_i)}_{i=1}^n\right)\) defined by the set of goods \(V \cup E\) with
    delays \(d_v(x) \defas k_v - \frac{1}{2}\), for \(v \in V\), and \(d_e(x)
    \defas x - 1\), for \(e \in E\), and by the strategies \(S'_v \defas
    \{\{v\}, \{e \in E \mid v \in e\}\}\) for all \(v \in V\). The strategy
    \(\{v\}\) corresponds to \(v\) remaining inactive, which has no effect on
    adjacent vertices, while \(\{e \in E \mid v \in e\}\) corresponds to \(v\)
    being active, which makes being active more expensive for \(v\)'s neighbors
    by ``congesting'' incident edges.

    We show that \(C\) is isomorphic to \(\Gamma\) as witnessed by the
    bijections \(\pi \defas \operatorname{id}_{[n]}\) and \(\phi_v \colon S_v
    \to S_v'\) with \(\phi_v(0) = \{v\}\) and \(\phi_v(1) = \{e \in E \mid v
    \in e\}\) for all \(v \in V\). To this end, let \(v \in V\) be a player and
    \(s \in \prod_{i=1}^n S_i\) a strategy profile of \(\Gamma\). If \(s_v =
    0\), then \(\phi_v(s_v) = \{v\}\), and thus
    \begin{align*}
        u_v(s)
        = -\left(k_v - \frac{1}{2}\right)
        = - d_v(x_s(\{v\}))
        = u'_v(\phi(s)).
    \end{align*}
    If \(s_v = 1\), then \(\phi_v(s_v) = \{e \in E \mid v \in e\}\), so that
    \begin{align*}
        u_v(s)
        = ~& -\deg_s(v) \\
        = ~& -\sum_{\{v, i\} \in E} s_i \\
        = ~& -\sum_{\{v, i\} \in E}
            \Big| \Big\{
                j \in [n] \setminus \{v\}
                \mid
                s_j = 1
                \land
                \underbrace{\{v, i\} \in \{e \in E \mid j \in e\}%
                    }_{j = i~\text{as}~j \neq v}
            \Big\} \Big| \\
        = ~&~ \begin{aligned}[t]
                - &\sum_{\{v, i\} \in E}
                \left| \left\{
                    j \in [n]
                    \mid
                    s_j = 1 \land \{v, i\} \in \{e \in E \mid j \in e\}
                \right\} \right| \\
                + &\sum_{\{v, i\} \in E}
                \Big| \Big\{
                    j \in \{v\}
                    \mid
                    s_j = 1
                    \land
                    \underbrace{\{v, i\} \in \{e \in E \mid j \in e\}%
                        }_{\text{true as}~j = v}
                \Big\} \Big|
            \end{aligned} \\
        \intertext{by the definitions of \(u_v\) for \(s_v = 1\), and
            \(\deg_s\). It follows that}
        u_v(s)
        = ~& -\left(
                \sum_{\{v, i\} \in E}
                    \left|\{ j \in [n] \mid \{v, i\} \in \phi_j(s_j) \}\right|
                - \deg_G(v)
            \right) \\
        = ~& -\sum_{\{v, i\} \in E} \left(
            \sum_{j=1}^n \mathbf{1}_{\phi_j(s_j)}(\{v, i\}) - 1 \right) \\
        \intertext{by the definition of \(\phi_j\) for \(s_j = 1\). Further,}
        u_v(s)
        = ~& -\sum_{\{v, i\} \in E} d_{\{v, i\}}\left(
            \sum_{j=1}^n \mathbf{1}_{\phi_j(s_j)}(\{v, i\}) \right) \\
        = ~& -\sum_{e \in \{e \in E \mid v \in e\}} d_e\left(
            \sum_{i=1}^n \mathbf{1}_{{\phi(s)}_i}(e) \right) \\
        \intertext{follows from the definition of \(d_e\) for \(e \in E\).
            Finally,}
        u_v(s)
        = ~& -\sum_{e \in \phi_v(s_v)} d_e\left( x_{\phi(s)}(e) \right) \\
        = ~& u'_v(\phi(s)),
    \end{align*}
    which concludes the proof.
\end{proof}

This observation allows us to state the main theorem of this section, which
answers the first out of two open questions in~\cite{Gilboa22}:

\begin{theorem}\label{monotone}
    The binary public goods game on undirected graphs with best-response
    patterns \(T^i \in 1^+0^+\), \(i \in V\), always admits a PNE;\@ the
    associated decision problem is thus in $\mathsf{P}$.
\end{theorem}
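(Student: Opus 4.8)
The plan is to combine \cref{isomorphic} with the classical fact, due to Rosenthal, that every finite congestion game admits a pure Nash equilibrium. Fix an instance \(G = (V, E)\) of the public goods game with patterns \(T^i \in 1^+0^+\) for all \(i \in V\). By \cref{isomorphic}, there is a strategic game \(\Gamma\) that is a consistent representation of the best-response game associated with \(G\) and that is strongly isomorphic to a congestion game \(C\); from the construction in that proof, \(C\) has the finite set of goods \(V \cup E\) with delays \(d_v(x) = k_v - \frac{1}{2}\) for \(v \in V\) and \(d_e(x) = x - 1\) for \(e \in E\).

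First I would note that \(C\) has a PNE.\@ Since \(C\) is a finite congestion game, it admits Rosenthal's exact potential
\[
    \Phi(s) \defas \sum_{e \in V \cup E} \; \sum_{x=1}^{x_s(e)} d_e(x),
\]
so that a profitable unilateral deviation would strictly decrease \(\Phi\); as the strategy space of \(C\) is finite, any minimizer of \(\Phi\) is therefore a PNE of \(C\).

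It remains to transport this equilibrium back to \(G\). Applying \cref{isomorphisms-preserve-pne} to the isomorphism \(C \cong \Gamma\) shows that \(\Gamma\) admits a PNE.\@ Since \(\Gamma\) is a consistent representation of the best-response game underlying \(G\), the earlier lemma equating the PNE of a BRG with those of any consistent representation yields that \(G\) itself admits a PNE.\@ For the complexity claim, this means that the decision problem ``does the given PGG admit a PNE?'', restricted to patterns in \(1^+0^+\), has the constant answer \emph{yes} and is therefore trivially in \(\mathsf{P}\).

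I do not expect a genuine obstacle: the argument is just a two-step transfer along results already in hand. The one point requiring care is to chain the two distinct equivalence notions in the right order and direction --- the strategic-game \emph{isomorphism} between \(C\) and \(\Gamma\), handled by \cref{isomorphisms-preserve-pne}, and the \emph{consistent-representation} relation between \(\Gamma\) and the best-response game \(G\), handled by the corresponding PNE lemma. As a remark not needed for the statement, pulling \(\Phi\) back along the isomorphism gives a potential for \(\Gamma\) whose improving deviations coincide with the best responses in \(G\); this is the entry point for the polynomial-time bound on best-response dynamics announced in the introduction.
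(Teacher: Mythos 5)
Your proposal is correct and follows exactly the paper's argument: Rosenthal's theorem gives a PNE of the congestion game, \cref{isomorphisms-preserve-pne} transfers it to the consistent representation \(\Gamma\), and the lemma on consistent representations transfers it to the public goods game, whence the decision problem has constant answer \emph{yes}. The paper states this in a single sentence; you have merely spelled out the same chain of transfers in more detail.
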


\begin{proof}
    As every congestion game has a PNE~\cite{Rosenthal73}, the claim follows
    from \cref{isomorphisms-preserve-pne,isomorphic}.
\end{proof}

Additionally, the representation as a congestion game brings with it
convergence guarantees:

\begin{corollary}\label{quadratic-convergence}
    In the game of \cref{monotone}, best-response dynamics converge to a PNE
    after at most \(O(n^2)\) improving steps.
\end{corollary}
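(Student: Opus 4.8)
The plan is to reuse, verbatim, the congestion game $C$ and the consistent representation $\Gamma$ of $G$ built in the proof of \cref{isomorphic}. Every congestion game is a potential game: Rosenthal's potential $\Phi(t) \defas \sum_{g} \sum_{j=1}^{x_t(g)} d_g(j)$ satisfies $\Phi(t_{-i},a) - \Phi(t) = u_i(t) - u_i(t_{-i},a)$, so it strictly decreases along every better-response sequence of $C$. In a BRG a best response is in particular a better response, consistent representations preserve better-response sequences, and the isomorphism of \cref{isomorphic} preserves utilities and hence $\Phi$; consequently every run of best-response dynamics in the PGG $G$ maps to a better-response sequence in $C$ along which $\Phi$ strictly decreases. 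It thus suffices to (i) bound the range of $\Phi$ and (ii) bound the decrease of $\Phi$ per step from below.

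For both points I would pull $\Phi$ back through the bijection $\phi$ of \cref{isomorphic}. Since each vertex-good $v$ carries the constant delay $k_v - \tfrac12$ (with $k_v$ the number of leading ones of $T^v$) and is used by at most one player, and each edge-good carries delay $x-1$ and is used by at most two players, a short computation gives
\begin{align*}
    \Phi(\phi(s)) = \sum_{v \in V : s_v = 0} \Big(k_v - \tfrac12\Big) + \big|\{\{u,w\} \in E : s_u = s_w = 1\}\big|.
\end{align*}
All delay values, hence all values of $\Phi$, lie in $\tfrac12\mathbb{Z}$, so every better-response step decreases $\Phi$ by at least $\tfrac12$; this settles (ii). (Equivalently, one checks directly from \eqref{eq_utility} that a best-responding player gains at least $\tfrac12$ of utility in $\Gamma$.)

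For (i) I would first argue that we may assume $k_v \le n$ for all $v$: because $\deg_s(v) \le \deg_G(v) \le n-1$ in every profile $s$, replacing each $k_v$ by $\min\{k_v, \deg_G(v)+1\}$ leaves the pattern $T^v$ unchanged on all indices $0,\dots,\deg_G(v)$ that are ever queried, hence leaves every best-response function, and therefore the family of best-response sequences, unchanged. Under this assumption $0 \le \Phi(\phi(s)) \le \sum_{v\in V} k_v + |E| \le n^2 + \binom{n}{2} = O(n^2)$. Combining with (ii), every run of best-response dynamics terminates after $O(n^2)$ steps.

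The only genuinely non-routine point is this last reduction: a priori the $k_v$, and hence the range of Rosenthal's potential, need not be polynomially bounded, so one must observe that capping $k_v$ at $\deg_G(v)+1$ costs nothing because active degrees never exceed $n-1$. Everything else is a direct combination of the construction in \cref{isomorphic} with the standard potential-function argument for congestion games.
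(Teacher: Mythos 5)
Your proof is correct and follows essentially the same route as the paper: pass to the congestion game of \cref{isomorphic}, use Rosenthal's potential, observe that \(2\Phi\) is integral so each improving step drops the potential by at least \(\tfrac12\), and cap the \(k_v\) at degree plus one (the paper caps \(k_{\max}\) at \(\Delta(G)+1\) globally, which is the same observation) to bound the range by \(O(n^2)\). Your explicit pullback of \(\Phi\) with the vertex term summed over \emph{inactive} vertices is in fact the correct reading of the isomorphism \(\phi_v(0)=\{v\}\).
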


\begin{proof}
    Let \(k_{\max} \defas \max_{i \in V} k_i\). Any vertex can have at most
    \(\Delta(G)\) active neighbors, so the game for \(k_{\max} > \Delta(G) +
    1\) is equivalent to the one for \(k_{\max} = \Delta(G) + 1\). Let thus
    \(k_{\max} \leq \Delta(G) + 1 \leq n\), without loss of generality.

    The congestion game \(C\) in the proof of \cref{isomorphic} is an exact
    potential game~\cite{Monderer96} with potential function
    \begin{align*}
        \Phi(s)
        = \sum_{g \in V \cup E} \sum_{\ell = 1}^{x_s(g)} d_e(\ell)
        = \sum_{\{u, v\} \in E} s_u s_v
            + \sum_{v \in V} \left( k_v - \frac{1}{2} \right) s_v
        \leq |E| + k_{\max} |V|.
    \end{align*}
    From \(d_e(\ell) \geq 0\) for all \(\ell \geq 1\) and \(e \in V \cup E\),
    it follows that \(\Phi(s) \geq 0\) for all \(s \in \prod_{i \in [n]}
    S'_i\). On the other hand, it is \(\Phi(s) \leq |E| + k_{\max} |V| \leq 2
    n^2\). As \(2 \Phi\) is integral by construction, these bounds imply that
    any decreasing sequence \(\Phi(s_1) > \ldots{} > \Phi(s_N)\) has length at
    most \(O(n^2)\). The claim follows as any better-response sequence in \(G\)
    corresponds to a better-response sequence in \(C\), which has decreasing
    potential.
\end{proof}

\section{Hardness of decreasing patterns with weighted edges}

We next investigate a natural extension of the public goods game, in which ties
can have varying importance:

\begin{definition}[Weighted binary public goods game]
    For an undirected simple graph \(G = (V, E)\) with \(n\) vertices \(V =
    [n]\), edge weights \(w_e \in \mathbb{Z}_{\geq 1}\) for all \(e \in E\),
    and best-response patterns \(T^i = {(\tau^i_\ell)}_{\ell=0}^\infty\) for
    all \(i \in V\), we call the best-response game \(\left(n, {(\{0,
    1\})}_{i=1}^n, {(\beta_i)}_{i=1}^n\right)\) with \(\beta_i(s_{-i}) \defas
    T^i_{x_i(s_{-i})}\) and \(x_i(s_{-i}) \defas \sum_{j \in N_G(i)}
    {w_{\{i,j\}} s_{j}}\) an (edge-)weighted binary public goods game.
\end{definition}

We first notice that a straightforward adaptation of the proof of
\cref{isomorphic} yields that also games with weighted edges are isomorphic to
a congestion game:

\begin{proposition}\label{weighted-isomorphic}
    The weighted binary public goods game on undirected graphs with
    best-response patterns \(T^i \in 1^+0^+\), \(i \in V\), has a consistent
    representation that is isomorphic to a congestion game.
\end{proposition}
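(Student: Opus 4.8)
The plan is to mimic the proof of \cref{isomorphic} almost verbatim, replacing the unit contribution of an active neighbor across an edge by the edge's weight. Concretely, given a weighted PGG on \(G = (V, E)\) with patterns \(T^i = 1^{k_i}0^*\) and weights \(w_e \in \mathbb{Z}_{\geq 1}\), I would first define the strategic game \(\Gamma\) on strategy sets \(S_i = \{0,1\}\) with utilities \(u_i(s) = -(k_i - \tfrac{1}{2})\) if \(s_i = 0\) and \(u_i(s) = -\sum_{j \in N(i)} w_{\{i,j\}} s_j\) if \(s_i = 1\); that is, the active utility is \(-x_i(s_{-i})\) in the notation of the weighted PGG. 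The verification that \(\Gamma\) is a consistent representation of the weighted PGG is identical to before: if the best response is \(0\) then \(x_i(s_{-i}) \geq k_i\), so \(-(k_i - \tfrac12) > -k_i \geq -x_i(s_{-i})\), and if the best response is \(1\) then \(x_i(s_{-i}) \leq k_i - 1 < k_i - \tfrac12\), so \(-x_i(s_{-i}) \geq -(k_i-1) > -(k_i - \tfrac12)\).

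Next I would exhibit a congestion game \(C\) isomorphic to \(\Gamma\). The set of goods is again \(V \cup E\), with vertex delays \(d_v(x) = k_v - \tfrac12\) and, crucially, edge delays that account for the weight: \(d_e(x) = \max\{0, x - 1\} \cdot w_e\) (equivalently \(d_e(x) = (x-1) w_e\), since the argument \(x = x_s(e)\) is always at least \(1\) whenever \(e\) appears in some player's strategy and the term with \(x = 1\) contributes zero). The strategy sets are \(S'_v = \{\{v\},\, \{e \in E \mid v \in e\}\}\) as before, and the isomorphism is \(\pi = \operatorname{id}_{[n]}\), \(\phi_v(0) = \{v\}\), \(\phi_v(1) = \{e \in E \mid v \in e\}\). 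For the inactive case \(s_v = 0\) the utility match \(u_v(s) = -(k_v - \tfrac12) = -d_v(x_s(\{v\})) = u'_v(\phi(s))\) is immediate. For the active case \(s_v = 1\), each incident edge \(e = \{v, i\}\) of \(v\) has congestion \(x_{\phi(s)}(e) = 1 + s_i\) (the summand \(1\) from \(v\) itself, plus \(s_i\) from the unique other endpoint \(i\) if it is active), so \(d_e(x_{\phi(s)}(e)) = (1 + s_i - 1) w_e = s_i w_{\{v,i\}}\); summing over the incident edges gives \(u'_v(\phi(s)) = -\sum_{\{v,i\}\in E} s_i w_{\{v,i\}} = -x_v(s_{-v}) = u_v(s)\). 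This is the same chain of equalities as in \cref{isomorphic} with the bookkeeping of ``\(j = i\) as \(j \neq v\)'' unchanged; only the per-edge contribution is rescaled by \(w_e\).

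I do not anticipate a genuine obstacle here — the proof is a routine weighting of the earlier one — but the one point deserving care is the definition of the edge delay so that it is a legitimate delay function (defined on \(\mathbb{Z}_{\geq 1}\), returning a rational) and so that the \(x = 1\) case contributes exactly zero; writing \(d_e(x) = (x-1) w_e\) handles both, and since the weights are integral the resulting potential is still (twice) integral, should one later want convergence bounds analogous to \cref{quadratic-convergence}. I would close by noting that the construction is polynomial in the encoding of the weighted PGG (the weights appear linearly in the delay functions), so the representation is effectively computable, and then invoke \cref{isomorphisms-preserve-pne} together with Rosenthal's theorem exactly as in the proof of \cref{monotone} if an existence statement is wanted as a corollary.
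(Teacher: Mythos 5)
Your proposal is correct and matches the paper's proof, which is itself only a sketch stating that one reuses the construction of \cref{isomorphic} with active utility \(-x_i(s_{-i})\) and edge delays \(d_e(x) = w_e(x-1)\). You simply carry out the verification in full detail, which is exactly what the paper leaves implicit.
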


\begin{proof}[Proof (Sketch)]
    We use the same construction as in the proof of \cref{isomorphic} except
    that the utilities in the definition of \(\Gamma\) (\cref{eq_utility}) are
    given by \(u_i(s) \defas - x_i(s_{-1})\) for \(s_i = 1\) and the delay
    functions are defined as $d_e(x) \defas w_e(x-1)$ for all $e \in E$.
\end{proof}

In \cref{quadratic-convergence}, we have shown that best-response dynamics
converge in polynomial time to a PNE for unweighted binary public goods games.
In contrast, we show that for weighted games, the computation of a PNE is
$\mathsf{PLS}$-complete. For the proof, we use a straightforward reduction from
the problem of computing a pure Nash equilibrium in a threshold game, a
particular kind of congestion game where each player has two strategies only.
(Not to be confused with another ``threshold game'' introduced
in~\cite{Papadimitriou21}, where players pick their strategy from the unit
interval.) The only non-trivial part in the reduction is the fact that in a
threshold game, a player may be indifferent between their two strategies, while
in a weighted binary public goods game this cannot occur. Intuitively, the
absence of indifference makes the computation of equilibria for public goods
games only harder. For the proof, we formalize this intuition.

\begin{theorem}
    Computing a pure Nash equilibrium of a weighted binary public goods game on
    an undirected graph with patterns $T^i \in 1^+0^+$ for all $i \in [n]$ is
    $\mathsf{PLS}$-complete when patterns are encoded through the number of
    leading ones as a binary number.
\end{theorem}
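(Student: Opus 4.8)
The plan is to show membership in $\mathsf{PLS}$ and $\mathsf{PLS}$-hardness separately, using the congestion-game representation from \cref{weighted-isomorphic} for the former and a reduction from threshold games for the latter. For membership, I would invoke \cref{weighted-isomorphic}: a weighted binary public goods game with patterns $T^i \in 1^+0^+$ has a consistent representation isomorphic to a congestion game, and congestion games lie in $\mathsf{PLS}$ via Rosenthal's potential~\cite{Rosenthal73,Monderer96}. The only thing to check is that the reduction is polynomial in the chosen encoding: since patterns are encoded by their number of leading ones $k_i$ written in binary, and the delays in the congestion game are $d_e(x) = w_e(x-1)$ and $d_v(x) = k_v - \tfrac12$, all numbers appearing (including potential values) are polynomially bounded in the input size, so the local-search structure is a legal $\mathsf{PLS}$ instance and a single better-response improvement can be found in polynomial time. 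One subtlety to address explicitly: consistent representations preserve PNE and better-response sequences (by the lemmas in the preliminaries), so a local optimum of the congestion game maps back to a PNE of the public goods game.

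For hardness, I would reduce from the problem of computing a PNE in a \emph{threshold game}: a symmetric-strategy congestion game where each player $i$ has exactly two strategies, a ``null'' resource used only by $i$ with constant cost $t_i$ (the threshold), and a set of ``load'' resources shared with others, each with a linear or monotone delay. Computing a PNE of threshold games is $\mathsf{PLS}$-complete (this is the standard hard restriction of congestion games with two strategies per player). Given such an instance, I construct a weighted public goods graph $G$: create one vertex per player; encode the shared-resource structure by edges with integral weights so that the active degree $x_i(s_{-i}) = \sum_{j} w_{\{i,j\}} s_j$ equals the load player $i$ would see; and set player $i$'s pattern to $1^{k_i} 0^*$ where $k_i = t_i$ so that $i$ wants to be active exactly when its load is strictly below the threshold. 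Being active corresponds to selecting the load strategy, being inactive to the null strategy, and the best-response structure matches.

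The main obstacle is the discrepancy flagged in the paper: in a threshold game a player can be \emph{indifferent} between its two strategies (load exactly equal to threshold), whereas in a public goods game with a strict pattern no player is ever indifferent — the pattern forces a definite best response at every active degree. So the naive identification of strategy profiles does not preserve the PNE set: a profile that is a PNE of the threshold game because some player is tied need not be a PNE of the constructed public goods game. The fix I would carry out is to perturb the threshold instance before reducing: scale all delays and thresholds by a common integer and then shift each threshold $t_i$ by a tiny amount (e.g.\ to $t_i - \tfrac12$ after scaling, matching the $k_i - \tfrac12$ vertex delay in the congestion representation) so that ties are broken consistently in favour of whichever choice the public goods player would make; one must verify this perturbation (i) keeps all weights integral after clearing denominators, (ii) is polynomial-size, and (iii) does not change the PNE set of the threshold game — the last point holds because a sufficiently small generic perturbation of the thresholds turns every weak equilibrium into a strict one without creating or destroying strict equilibria, and it is exactly this strict equilibrium that the public goods game computes. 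Composing: a PNE of the perturbed public goods game yields, via \cref{weighted-isomorphic} and the PNE-preservation lemma, a PNE of the original threshold game, completing the $\mathsf{PLS}$-reduction.
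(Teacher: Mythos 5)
Your proposal follows essentially the same route as the paper: membership via the congestion-game representation of \cref{weighted-isomorphic} together with Rosenthal's potential, and hardness via a reduction from two-strategy threshold games~\cite{Ackermann08}, with the indifference of threshold players resolved by a half-integral tie-breaking shift --- which is exactly the $k_i - \frac{1}{2}$ vertex delay already present in the consistent representation; the paper realizes the same idea by setting $T^i = 1^{\lfloor\theta_i\rfloor}0^*$ and checking directly that every PNE of the constructed public goods game maps back to a PNE of the threshold game. Two points in your write-up need tightening. First, identifying the active degree $\sum_j w_{\{i,j\}} s_j$ with the load seen by player $i$ requires the shared resources of the threshold game to be pair resources $e_{i,j}$ with delays exactly of the form $a_{i,j}(x-1)$ for positive \emph{integer} $a_{i,j}$; a generic ``linear or monotone delay'' will not do, since the public goods player's response depends only on this weighted count. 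You must therefore verify that $\mathsf{PLS}$-hardness survives this restriction, which the paper does by invoking \cite[Remark~4.2]{Ackermann08} and observing that the coefficients in that construction can be taken integral. Second, your claim that the perturbation ``does not change the PNE set'' is false as stated: shifting $\theta_i$ downward by $\frac{1}{2}$ destroys every weak equilibrium in which some player uses the load strategy at load exactly $\theta_i$. This is harmless, because the reduction only needs the one-sided inclusion that every PNE of the perturbed (strict) game is a PNE of the original threshold game; that inclusion holds and is the statement you should actually prove, rather than equality of equilibrium sets.
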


\begin{proof}
    We first sketch membership in $\mathsf{PLS}$: Feasible solutions of a
    weighted binary public goods game are strategy profiles, encoded as binary
    vectors of size \(n\). Their value may be defined as the potential value of
    the congestion game that is isomorphic to the game's consistent
    representation, which can be computed in polynomial time. As an initial
    solution one may choose an all-zero vector; the neighborhood of a solution
    is just the flip-neighborhood of size \(n\).

    Ackermann et al.~\cite[Theorem~4.1]{Ackermann08} have shown that the
    computation of a pure Nash equilibrium for a threshold game is
    $\mathsf{PLS}$-complete. A threshold game is a congestion game where the
    set of goods $E$ is partitioned into two disjoint sets $E_{\text{in}}$ and
    $E_{\text{out}}$. The set $E_{\text{out}} \defas \{e_i \mid i \in [n]\}$
    contains a good $e_i$ for every player~$i$. The set $E_{\text{in}}$
    contains a good $e_{i,j}$ for every unordered pair of players $\{i,j\}
    \subseteq [n]$ with $i \neq j$. The delay function of the goods $e_i \in
    E_{\text{out}}$ is constant, i.e., for every player~$i$ there is a constant
    $\theta_i \in \mathbb{R}_{> 0}$ such that $d_{e_i}(x) = \theta_i$ for all
    $x \in \mathbb{Z}_{\geq 1}$. As explained in
    \cite[Remark~4.2]{Ackermann08}, the proof of
    \cite[Theorem~4.1]{Ackermann08} only uses delay functions of the form
    \begin{align}
        \label{eq:pls-reduction-delay-functions}
        d_{e_{i,j}}(x) = a_{i,j} (x-1) \quad \text{ for all $\{i,j\}
            \subseteq [n]$ with $i \neq j$, where $a_{i,j} > 0$.}
    \end{align}
    A closer examination of the proof further reveals that the values $a_{i,j}$
    can in fact be chosen to be integer. Thus, the $\mathsf{PLS}$-completeness
    also holds for this special case. The set of strategies of each player~$i$
    is given by $S_i = \{s^{\text{out}}_i, s^{\text{in}}_i\}$ with
    $s_i^{\text{out}} = \{e_i\}$ and $s_i^{\text{in}} = \{e_{i,j} \mid j \in
    [n], j \neq i\}$.

    For the reduction, let an instance of a threshold game with the delay
    functions as in~\eqref{eq:pls-reduction-delay-functions} be given. We
    construct a corresponding instance of a weighted binary public goods game
    as follows. The game is played on a complete graph $G = (V,E)$ with edge
    weights $w_e = a_{i,j} \in \mathbb{Z}_{\geq 1}$ for all $e = \{i,j\} \in
    E$. We further set $T^i \defas 1^{\lfloor \theta_i \rfloor}0^*$ for all $i
    \in [n]$. Let $s$ be a pure Nash equilibrium of the thus defined weighted
    binary public goods game. We claim that $\bar{s}$ defined for all $i \in
    [n]$ as $\bar{s}_i = s_i^{\text{out}}$, if $s_i = 0$, and $\bar{s}_i =
    s_i^{\text{in}}$, if $s_i = 1$, is a pure Nash equilibrium of the threshold
    game. By the definition of~$\bar{s}$ and the fact that $s$ is a pure Nash
    equilibrium of the weighted binary public goods game, we have $\bar{s}_i =
    s_i^{\text{out}}$ whenever $x_i(s_{-i}) \geq \lfloor \theta_i \rfloor + 1$
    and $\bar{s}_i = s_i^{\text{in}}$ whenever $x_i(s_{-i}) \leq \lfloor
    \theta_i \rfloor$. A player~$i$ with $\bar{s}_i = s_i^{\text{out}}$ has
    utility $-\theta_i$. After a deviation to $s_i^{\text{in}}$, the utility
    would be $-x_i(s_{-i}) \leq -(\lfloor \theta_i \rfloor +1) \leq -\theta_i$,
    so that this deviation is not profitable. On the other hand, a player~$i$
    with $\bar{s}_i = s_i^{\text{in}}$ has utility $-x_i(s_{-i})$. After a
    deviation to $s_i^{\text{out}}$, the utility would be $-\theta_i \leq
    -\lfloor \theta_i \rfloor \leq -x_i(s_{-i})$, so that also this deviation
    is not profitable. This concludes the $\mathsf{PLS}$-reduction.
\end{proof}

\section{Hardness of the picky pattern}

In a picky pattern, players want to perform the action if either none or a
particular number of neighbors also does so, formally \(T = 10^k10^*\) with \(k
\geq 1\). We define a number of graph gadgets that are used in a
polynomial-time reduction from the \textsf{NP}-hard~\cite{Garey79}
\texttt{POSITIVE-1IN3-SAT} problem:

\begin{definition}
    An instance of the \texttt{POSITIVE-1IN3-SAT} problem is a collection of
    \(\ell\) clauses \(I \defas \left\{ \{l^i_1, l^i_2, l^i_3\} \mid i \in
    [\ell] \right\}\) with \(l^i_j \in X \cup \{\bot\}\) for all \((i, j) \in
    [\ell] \times [3]\), where \(X = \{\xi_1, \ldots, \xi_m\}\) is a set of
    boolean variables and where \(\bot\) denotes unconditional falsity. The
    problem is to decide whether there is a truth assignment to the variables
    in \(X\) satisfying
    \begin{align*}
        \Phi(X) \defas \bigwedge_{i \in [\ell]} \left(
            \left( l^i_1 \lor l^i_2 \lor l^i_3 \right)
            \land \lnot \left( l^i_1 \land l^i_2 \right)
            \land \lnot \left( l^i_2 \land l^i_3 \right)
            \land \lnot \left( l^i_3 \land l^i_1 \right)
        \right).
    \end{align*}
\end{definition}

A closely related problem was also used in a reduction to a binary public goods
game in~\cite{Gilboa22}.

In the reduction, the literals of a \texttt{POSITIVE-1IN3-SAT} instance (either
non-negated variables or falsity) are represented by \emph{literal vertices}
and truth assignments to the underlying boolean variables are encoded by these
vertices' strategies. The gadgets represent logical operators and connect the
literal vertices in such a way that the resulting graph admits a PNE if and
only if the \texttt{POSITIVE-1IN3-SAT} formula is satisfiable. To this end,
every gadget has a set of \emph{operand vertices} that are identified with a
subset of the literal vertices; no other gadget vertex has an edge leaving the
gadget. We refer to gadget vertices that are adjacent to the operator vertices
as \emph{membrane vertices}. Three out of four gadgets are constructed in such
a way that membrane vertices are inactive in any PNE on a graph containing the
gadget. We call this property \emph{safety} as it rules out potential side
effects when the gadget is added to a graph. In particular, this ensures that
adding a gadget to a graph that admits no PNE will not allow a PNE to exist in
the resulting graph:

\begin{lemma}\label{safety-lemma}
    Let \(G_\circ = (V_\circ, E_\circ)\) be a graph gadget with operand
    vertices \(X \subseteq V_\circ\) and membrane vertices \(M \defas
    N_{G_\circ}(X)\) such that for all graphs \(G' = (V', E')\) with \(V' \cap
    V_\circ = X\), for all PNE \(s\) on \((V' \cup V_\circ, E' \cup E_\circ)\),
    and for all \(m \in M\), it is \(s_m = 0\). Let further \(G = (V, E)\) be a
    fixed graph with \(V \cap V_\circ = X\) that admits no PNE.\@ Then, also
    \(H = (V \cup V_\circ, E \cup E_\circ)\) admits no PNE.\@
\end{lemma}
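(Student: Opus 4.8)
The plan is to argue by contradiction: suppose $s$ is a PNE of $H = (V \cup V_\circ, E \cup E_\circ)$, and derive that its restriction to $V$ is a PNE of $G$, contradicting the assumption that $G$ admits none. The key structural observation is that $V$ and $V_\circ$ overlap exactly in $X$, and every edge of $H$ either lies in $E$ or lies in $E_\circ$ (these edge sets may themselves overlap, but no edge of $H$ joins $V \setminus X$ to $V_\circ \setminus X$). Thus for a vertex $v \in V \setminus X$, its neighborhood in $H$ equals its neighborhood in $G$, so its active degree is unchanged and the best-response condition for $v$ in $H$ coincides with that in $G$. The only vertices whose $G$-neighborhood could differ from their $H$-neighborhood are the operand vertices $X$, which in $G$ may additionally see the membrane vertices $M \subseteq V_\circ \setminus X$.

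The crucial step is then to invoke the safety hypothesis. Applying it with $G' \defas G$ (which satisfies $V' \cap V_\circ = X$ by assumption), the hypothesis tells us that in any PNE $s$ on $H = (V \cup V_\circ, E \cup E_\circ)$ we have $s_m = 0$ for every $m \in M$. Hence the membrane vertices contribute nothing to the active degree of any operand vertex $x \in X$: the active degree of $x$ in $H$ equals the active degree of $x$ within $G$ (its $G$-neighbors outside $V_\circ$ are counted identically, and its $G$-neighbors inside $V_\circ$, namely those in $M$, are all inactive). Therefore the best-response condition $s_x = T^x_{\deg_s(x)}$ holding in $H$ implies the same condition holds with respect to the graph $G$.

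Combining the two cases, $s$ restricted to $V$ is best-responding at every vertex of $G$, i.e., it is a PNE of $G$ — a contradiction. So $H$ admits no PNE. I would be careful about one subtlety: the safety hypothesis is stated for PNE on $(V' \cup V_\circ, E' \cup E_\circ)$, and I am applying it to the very profile $s$ on $H$; since $G' = G$ makes $V' \cup V_\circ = V \cup V_\circ$ and $E' \cup E_\circ = E \cup E_\circ$, this is literally $H$, so the application is direct. A second point worth spelling out is that $M \subseteq V_\circ \setminus X$: since $M = N_{G_\circ}(X)$ is by definition disjoint from $X$, and $M \subseteq V_\circ$, the membrane vertices lie outside $V$ and so outside $G$ entirely, which is why their being inactive is exactly what is needed to make the operand vertices' behavior in $H$ match their behavior in $G$. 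None of the steps requires heavy computation; the main obstacle is simply to verify the neighborhood bookkeeping carefully and to confirm that the definition of safety is quantified over the right class of graphs $G'$ so that $G$ itself is an admissible choice.
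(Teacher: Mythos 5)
Your proof is correct and takes essentially the same approach as the paper: restrict the hypothetical PNE \(s\) of \(H\) to \(V\), invoke the safety hypothesis with \(G' = G\) to conclude that all membrane vertices are inactive, deduce that active degrees of vertices in \(V\) (in particular of the operand vertices) are unchanged, and obtain a PNE of \(G\), a contradiction. The paper merely organizes the degree-preservation step contrapositively --- a vertex of \(V\) whose active degree changed would need an active neighbor outside \(V\), which is forced to lie in \(M\) --- but the content is identical.
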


\begin{proof}
    Let \(G_\circ\), \(X\), \(M\), \(G\), and \(H\) as in the lemma and assume
    towards a contradiction that \(H\) admits a PNE \(s\). Consider the
    strategy profile \(t\) obtained by limiting \(s\) to vertices in \(G\). If
    \(\deg_t(v) = \deg_s(v)\) for all \(v \in V\), then \(s\) is a PNE on
    \(G\), so there is a \(v \in V\) with \(\deg_t(v) \neq \deg_s(v)\). If
    \(N_H(v) \subseteq V\), then \(\deg_t(v) = \deg_s(v)\) by construction of
    \(t\), so there is further a \(u \in N_H(v) \setminus V\) with \(s_u = 1\).
    Since \(u \not\in V\), \(\{u, v\} \not\in E\), so \(\{u, v\} \in E_\circ\),
    implying \(u, v \in V_\circ\). From \(v \in V\) it follows that \(v \in X\)
    and from \(u \in N_H(v)\) it follows that \(u \in N_{G_\circ}(v)\) and thus
    \(u \in N_{G_\circ}(X) = M\), contradicting \(s_u = 1\).
\end{proof}

\subsection{The \texttt{NEAR-OR} gadget}

\begin{figure}
    \centering
    \begin{subfigure}[b]{0.4\textwidth}
        \centering
        \begin{tikzpicture}
    \input{tikz_header}

    \node[membrane, label=above:{$w$}] (w) {};
    \node[private, below left=of w, label=left:{$y$}] (y) {};
    \node[private, below left=of y, label=below:{$y'$}] (y_) {};
    \node[private, below right=of w, label=right:{$z$}] (z) {};
    \node[private, below right=of z, label=below:{$z'$}] (z_) {};
    \node[private, below right=of y, label=below:{$q$}] (q) {};

    \begin{scope}[node distance=12mm]
        \node[public, above left=of w, label=above:{$x_1$}] (x_1) {};
        \node[public, above right=of w, label=above:{$x_\ell$}] (x_l) {};
    \end{scope}

    \node[group, fit=(x_1) (x_l), label=right:{$X$}] {};

    \draw (w) -- (y) -- (y_) -- (q) -- (y) -- (z) -- (q) -- (z_) -- (z) -- (w);
    \draw (x_1) -- (w);
    \draw (x_l) -- (w);
    \draw[leftout] (x_1) -- (x_l);
\end{tikzpicture}
        \caption{\texttt{NEAR-OR} gadget for \(k = 1\).}%
        \label{fig:near-or-gadget-1}
    \end{subfigure}
    \begin{subfigure}[b]{0.4\textwidth}
        \centering
        \begin{tikzpicture}
    \input{tikz_header}

    \node[membrane, label=above:{$w$}] (w) {};
    \node[private, below left=of w, label=left:{$y$}] (y) {};
    \node[private, below right=of w, label=right:{$z$}] (z) {};
    \node[private, below left=of y, label=below:{$y_1$}] (y_1) {};
    \node[private, right=of y_1, label=below:{$y_k$}] (y_k) {};
    \node[private, below right=of z, label=below:{$z_k$}] (z_k) {};
    \node[private, left=of z_k, label=below:{$z_1$}] (z_1) {};

    \begin{scope}[node distance=12mm]
        \node[public, above left=of w, label=above:{$x_1$}] (x_1) {};
        \node[public, above right=of w, label=above:{$x_\ell$}] (x_l) {};
    \end{scope}

    \node[group, fit=(x_1) (x_l), label=right:{$X$}] {};
    \node[group, fit=(y_1) (y_k), label=left:{$Y$}] {};
    \node[group, fit=(z_1) (z_k), label=right:{$Z$}] {};

    \draw (w) -- (y) -- (z) -- (w);
    \draw (y) -- (y_1); \draw (y) -- (y_k); \draw[leftout] (y_1) -- (y_k);
    \draw (z) -- (z_1); \draw (z) -- (z_k); \draw[leftout] (z_1) -- (z_k);
    \draw (x_1) -- (w);
    \draw (x_l) -- (w);
    \draw[leftout] (x_1) -- (x_l);
\end{tikzpicture}
        \caption{\texttt{NEAR-OR} gadget for \(k \geq 2\).}%
        \label{fig:near-or-gadget-k}
    \end{subfigure}
    \caption{%
        \texttt{NEAR-OR} gadget: In any PNE \(s\) on a graph that contains this
        gadget as a subgraph in such a way that non-black vertices are
        connected only to other gadget vertices, it is \(\sum_{i=1}^\ell
        s_{x_i} \not\in \{0, k + 1\}\). We refer to black vertices as
        \emph{operand vertices}, to gray vertices as \emph{membrane vertices},
        and to white vertices as \emph{internal vertices}.}%
    \label{fig:near-or-gadgets}
    \caption*{(a) A graph gadget comprising the \(3\)-sun graph \(S3\) and
        \(\ell\) additional vertices \(X\) attached to its top corner, which is
        labeled \(w\). The middle level of the \(S3\) is labeled \(y\) and
        \(z\), the lower level \(y'\), \(q\), and \(z'\). (b) A triangle with
        additional vertex groups \(X\), \(Y\), and \(Z\) attached to its
        corners.}
\end{figure}

The \texttt{NEAR-OR} gadget (\cref{fig:near-or-gadgets}) is used to ensure
(under the assumption that a PNE exists) that at least one literal in each
clause of a \texttt{POSITIVE-1IN3-SAT} instance evaluates to true. The ``near''
in its name stems from the fact that, when used to represent an \(\ell\)-ary
logical operator with \(\ell > k\), the gadget cannot distinguish between a
total of \(0\) or \(k + 1\) operands evaluating to true; in both cases the
gadget will prevent a PNE.\@ The gadget further appears as a building block in
other gadgets, where we make use of this property. We call the \texttt{NEAR-OR}
gadget for \(\ell = 1\) the \texttt{TRUE} gadget as it forces its single
operand vertex to be active in any PNE.\@

The following lemma implies that the \texttt{NEAR-OR} gadget forbids a PNE in
which none of its operand vertices are active.

\begin{lemma}\label{near-or-gadget_restrictive-0}
    The \texttt{NEAR-OR} gadget with operand vertices removed admits no PNE.\@
\end{lemma}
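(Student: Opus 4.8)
The plan starts from the observation that, under the picky pattern \(T = 10^k10^*\), a vertex \(v\) is best-responding in a profile \(s\) with active set \(A\) precisely when \(v \in A \iff \deg_s(v) \in \{0, k + 1\}\); in particular, in a PNE every inactive \(v\) satisfies \(\deg_s(v) \notin \{0, k + 1\}\). Deleting the operand vertices \(x_1, \dots, x_\ell\) (together with their edges to \(w\)) leaves, for \(k = 1\), the \(3\)-sun \(S3\) on \(\{w, y, z, y', q, z'\}\) and, for \(k \geq 2\), a triangle on \(\{w, y, z\}\) with \(k\) private pendant vertices \(y_1, \dots, y_k\) attached to \(y\) and \(z_1, \dots, z_k\) attached to \(z\). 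I would treat the two cases separately; the split is genuine, since the \(k \geq 2\) construction fails for \(k = 1\) (there, activating exactly \(\{w, y, z\}\) is a PNE).

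For \(k \geq 2\) I would argue by forced propagation. Each pendant \(y_i\) has \(y\) as its only neighbor, so \(\deg_s(y_i) \in \{0, 1\}\); as \(T_0 = 1\) and \(T_1 = 0\), being best-responding forces \(y_i \in A \iff y \notin A\), and symmetrically \(z_i \in A \iff z \notin A\). If \(y \in A\), then no \(y_i\) is, so \(\deg_s(y) = |\{w, z\} \cap A| \leq 2 < k + 1\), hence \(\deg_s(y) = 0\) and \(w, z \notin A\); but then every \(z_i \in A\), so \(\deg_s(z) = |\{y\} \cap A| + k = k + 1\), contradicting \(z \notin A\). By symmetry \(z \notin A\) as well, so every \(y_i, z_i \in A\); now \(\deg_s(y) = |\{w\} \cap A| + k\) forces \(w \notin A\) (otherwise this equals \(k + 1\)), but then \(\deg_s(w) = 0\) with \(w \notin A\), a contradiction. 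Hence no PNE exists.

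For \(k = 1\) the pattern has \(T_0 = T_2 = 1\) and all other entries zero, and \(\Delta(S3) = 4\), so in a PNE every active vertex has active degree exactly \(0\) or \(2\); equivalently, \(A\) induces in \(S3\) a disjoint union of isolated vertices and cycles. As \(S3\) is chordal, these cycles are triangles. If \(A\) contains a triangle \(\Delta\), then each vertex of \(\Delta\) already has two neighbors within \(\Delta\), so \(\Delta\) is a connected component of the induced subgraph; since in \(S3\) every vertex has a neighbor in every triangle, this forces \(A = \Delta\). Thus \(A\) is either independent or equal to one of the four triangles of \(S3\). Each triangle of \(S3\) has some vertex outside it adjacent to exactly two of its members, and that vertex would then be inactive with active degree \(2\), contradicting \(T_2 = 1\). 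For independent \(A\), a short check finishes the argument: at most one of the three degree-\(4\) vertices \(y, z, q\) can lie in \(A\) (they are pairwise adjacent), and following the consequences in the two resulting subcases — one of \(\{y, z, q\}\) in \(A\), or none of them — always exposes an inactive vertex with active degree \(0\) or \(2\). (The empty set is the degenerate independent case, where already \(\deg_s(w) = 0\) with \(w \notin A\).)

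I expect the \(k = 1\) case to be the main obstacle. The \(3\)-sun is small but dense, so the reduction of a putative PNE to ``independent set or single triangle'' rests on a few easy-to-state but easy-to-miss structural facts about \(S3\) (no induced cycle of length at least \(4\); every two triangles intersect; every vertex has a neighbor in every triangle), after which the remaining case check is routine but must be carried out in full. The \(k \geq 2\) case, by contrast, is a short deterministic propagation with essentially no choices to make.
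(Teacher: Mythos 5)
Your proof is correct, and while the $k \geq 2$ half coincides with the paper's argument (forced propagation from the pendant vertices: $y_i$ active iff $y$ inactive, then a short chase to a contradiction, organized by the paper as a case split on the pair $(y,z)$ rather than starting from $y$), your $k = 1$ half takes a genuinely different route. The paper first bounds $\deg_s(v) \leq 3$, observes that a vertex is then active iff its active degree is even, and runs a parity count to conclude $|A| \in \{2,4\}$ before an explicit case check. You instead use that active best-responders have active degree exactly $0$ or $2$, so $G[A]$ is a disjoint union of isolated vertices and induced cycles; chordality of the $3$-sun turns the cycles into triangles, and the facts that triangles pairwise intersect and that every outside vertex sees every triangle collapse the possibilities to ``$A$ independent or $A$ a single triangle.'' This buys a cleaner structural reduction with fewer ad hoc cases, at the cost of a few graph-specific lemmas about $S3$ that must each be verified (all of which hold). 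The one place you compress is the independent-set subcase, which you assert rather than execute; it does go through: if none of $y,z,q$ is active then $w,y',z'$ are all forced active and $y$ is inactive with active degree $2$, while if exactly one of them is active (say $q$, the others being symmetric) then $w$ is forced active and again $y$ is inactive with active degree $2$. With that check written out, the argument is complete.
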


\begin{proof}[Proof for \(k = 1\)]
    Let \(G\) be the graph of \cref{fig:near-or-gadget-1} without the vertices
    in \(X\) and assume towards a contradiction that \(G\) admits a PNE \(s\).
    Assume further that \(\deg_s(v) \geq 4\) for some \(v \in V\). Then, \(v\)
    is a vertex with \(\deg(v) = 4\) and \(v\) is inactive in \(s\). Consider
    \(\{u, v\} \in E\) with \(\deg(u) = 2\). Then, \(u\) is active with
    \(\deg_s(u) = 1\) as \(N(u) = \{v, v'\} \in E\), contradicting that \(s\)
    is a PNE.\@ We have thus \(\deg_s(v) \leq 3\) for all \(v \in V\). In
    particular, \(v \in V\) is active if and only if \(\deg_s(v)\) is even. Let
    next \(A \subseteq V\) active and \(I = V \setminus A\) inactive. Since
    \(\sum_{a \in A} \deg_s(a) + \sum_{i \in I} \deg_s(i) = \sum_{a \in A}
    \deg(a)\) is even, also \(|I|\) and by extension \(|A| = 6 - |I|\) are
    even. The cases of \(|A| \in \{0, 6\}\) are easily ruled out, so either
    \(|A| = 2\) or \(|I| = 2\). If \(A = \{a_1, a_2\} \in E\), \(\deg_s(a_1) =
    1\) contradicts \(a_1\) active. If \(A = \{a_1, a_2\} \not\in E\),
    \(\operatorname{diam}(G) = 2\) implies an \(i \in I\) with \(A \subseteq
    N(i)\) so that \(\deg_s(i) = 2\) contradicts \(i\) inactive. If \(I =
    \{i_1, i_2\} \in E\), then \(G[A]\) is either the paw graph or the union of
    a \(P_3\) and a singleton. Both have a vertex of degree one, implying \(a
    \in A\) with \(\deg_s(a) = 1\), which contradicts \(a\) being active.
    Finally, if \(I = \{i_1, i_2\} \not\in E\), then \(\deg_s(i_1) = \deg(i_1)
    \in \{2, 4\}\). As we ruled out \(\deg_s(i_1) = 4\) earlier, this
    contradicts \(i_1\) inactive.
\end{proof}

\begin{proof}[Proof for \(k \geq 2\)]
    Let \(G\) be the graph of \cref{fig:near-or-gadget-k} without the vertices
    in \(X\) and assume towards a contradiction that \(G\) admits a PNE \(s\).
    If both \(y\) and \(z\) are inactive, then \(w\) and all vertices in \(Y\)
    and in \(Z\) have no active neighbors and are active. This contradicts
    \(y\) being inactive, as \(\deg_s(y) = |Y| + 1 = k + 1\). If both \(y\) and
    \(z\) are active, then \(\deg_s(v) \in \{1, 2\}\) for all \(v\) in \(\{w\}
    \cup Y \cup Z\), so all vertices other than \(y\) and \(z\) are inactive.
    Thus, \(\deg_s(y) = 1\), which contradicts \(y\) being active. If exactly
    one of \(y\) and \(z\) is active, say \(y\), then all vertices in \(Y\) are
    inactive and all in \(Z\) are active. If further \(w\) is inactive, then
    \(\deg_s(z) = k + 1\) contradicts \(z\) being inactive. If however \(w\) is
    active, then \(\deg_s(y) = 1\) contradicts \(y\) being active.
\end{proof}

The next lemma states that the \texttt{NEAR-OR} gadget also forbids a PNE in
which exactly \(k + 1\) of its operand vertices are active.

\begin{lemma}\label{near-or-gadget_restrictive-kp1}
    Let \(G_\lor = (V_\lor, E_\lor)\) an instance of the \texttt{NEAR-OR}
    gadget, \(G = (V, E)\) a graph with \(V \cap V_\lor = \{x_i \mid i \in
    [\ell]\}\), and \(H \defas (V \cup V_\lor, E \cup E_\lor)\). Then, \(H\)
    admits no PNE \(s\) with \(\sum_{i=1}^\ell s_{x_i} = k + 1\).
\end{lemma}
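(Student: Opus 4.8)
The plan is to fix a PNE \(s\) of \(H\) with \(\sum_{i=1}^\ell s_{x_i} = k+1\) and derive a contradiction, broadly mirroring the case analysis in the proof of \cref{near-or-gadget_restrictive-0}. The only genuinely new ingredient is the effect of the active operand vertices on the top vertex \(w\): since \(w\) is a membrane vertex, it has no neighbor outside \(V_\lor\) — this is where the hypothesis \(V \cap V_\lor = \{x_i \mid i \in [\ell]\}\) enters — so its neighborhood in \(H\) is exactly \(\{y, z\} \cup \{x_i \mid i \in [\ell]\}\) and hence \(\deg_s(w) = s_y + s_z + \sum_{i=1}^\ell s_{x_i} = s_y + s_z + (k+1) \in \{k+1, k+2, k+3\}\). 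Because \(T = 10^k10^*\), the vertex \(w\) can be best-responding in only two ways: either \(w\) is active, which requires \(\deg_s(w) = k+1\) and hence \(s_y = s_z = 0\); or \(w\) is inactive, which requires \(\deg_s(w) \geq k+2\) and hence \(s_y + s_z \geq 1\). These are essentially the regimes ``\(y\) and \(z\) both inactive'' and ``at least one of \(y, z\) active'' that appear in the proof of \cref{near-or-gadget_restrictive-0}, so within each case I would reuse the same local propagation; I treat \(k \geq 2\) and \(k = 1\) in parallel because the two gadget variants differ in shape.

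In the first case, \(w\) is active and \(y, z\) are inactive. For \(k \geq 2\): every vertex of \(Y\) has \(y\) as its only neighbor and therefore sees active degree \(0\), so all of \(Y\) is active; then \(\deg_s(y) = s_w + s_z + |Y| = 1 + 0 + k = k+1\), and \(T_{k+1} = 1\) forces \(y\) active, contradicting \(s_y = 0\). For \(k = 1\) the same idea goes through the lower level of the \(3\)-sun: \(y'\) is adjacent only to \(y\) (inactive) and \(q\), so \(s_{y'} = T_{s_q} = 1 - s_q\), whence \(\deg_s(y) = s_w + s_{y'} + s_q + s_z = 1 + (1 - s_q) + s_q + 0 = 2\), and \(T_2 = 1\) again forces \(y\) active — a contradiction.

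In the second case, \(w\) is inactive; exploiting the symmetry of the gadget under exchanging the \(y\)- and \(z\)-branches, I would assume \(y\) is active. For \(k \geq 2\): each vertex of \(Y\) sees active degree \(1\), so (as \(T_1 = 0\)) all of \(Y\) is inactive and \(\deg_s(y) = s_w + s_z = s_z\); if \(z\) is active then \(\deg_s(y) = 1\) and \(T_1 = 0\) forces \(y\) inactive, a contradiction, while if \(z\) is inactive then every vertex of \(Z\) sees active degree \(0\) and is active, so \(\deg_s(z) = s_w + s_y + |Z| = 0 + 1 + k = k+1\) and \(T_{k+1} = 1\) forces \(z\) active — a contradiction. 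For \(k = 1\): \(y'\) best-responding with \(y\) active gives \(s_{y'} = T_{1 + s_q} = s_q\), so \(\deg_s(y) = s_w + s_{y'} + s_q + s_z = 2 s_q + s_z\) must lie in \(\{0, 2\}\), which forces \(s_z = 0\); then \(z'\) best-responding with \(z\) inactive gives \(s_{z'} = T_{s_q} = 1 - s_q\), so \(\deg_s(z) = s_w + s_y + s_q + s_{z'} = 0 + 1 + s_q + (1 - s_q) = 2\), and \(T_2 = 1\) forces \(z\) active, contradicting \(s_z = 0\).

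I do not anticipate a deep obstacle, but the step that will need the most care is keeping the active-degree bookkeeping in each branch complete: some branches — most visibly the ``\(w\) inactive'' case — pass through partial assignments that look locally consistent and only close once one accounts for \emph{every} edge of the dense core of the gadget (the triangle \(wyz\) for \(k \geq 2\); the inner triangle \(yqz\) together with its edges to \(w, y', z'\) for \(k = 1\)) in the relevant degree count. Secondarily, one has to track which index of \(T\) each deduction reads off — \(T_0 = T_{k+1} = 1\) and \(T_j = 0\) for \(j \notin \{0, k+1\}\) — and invoke \(V \cap V_\lor = \{x_i \mid i \in [\ell]\}\) exactly where it is used, namely to conclude that \(\deg_s(w)\) gains no contribution beyond \(s_y + s_z\) and the \(k+1\) active operand vertices.
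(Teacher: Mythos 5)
Your proof is correct and follows essentially the same route as the paper's: the same top-level dichotomy on whether \(w\) is active (forced by \(\deg_s(w) = s_y + s_z + (k+1)\)) followed by local best-response propagation through the gadget, with the \(k \geq 2\) cases matching the paper's almost verbatim. Your only deviation is cosmetic: for \(k = 1\) you replace the paper's explicit case analysis on \(y', q, z'\) with the algebraic observation that best-responding forces \(s_{y'} = 1 - s_q\) (resp.\ \(s_{y'} = s_q\)), which streamlines the bookkeeping but is not a different argument.
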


\begin{proof}[Proof for \(k = 1\)]
    Assume towards a contradiction that \(s\) is a PNE of \(H\) with
    \(\sum_{i=1}^\ell s_{x_i} = k + 1\). If \(w\) is active, then both \(y\)
    and \(z\) are inactive as otherwise \(\deg_s(w) > k + 1\). If further \(q\)
    is inactive, then both \(y'\) and \(z'\) are active as \(\deg_s(y') =
    \deg_s(z') = 0\). This contradicts \(q\) inactive, as \(\deg_s(q) = 2\), so
    \(q\) is active. If \(y'\) is inactive, then \(\deg_s(y) = 2\) contradicts
    \(y\) inactive. So \(y'\) must be active, contradicting \(\deg_s(y') = 1\).
    If \(w\) is inactive, then \(\deg_s(w) > k + 1\), so \(y\) or \(z\) or both
    are active. If both \(y\) and \(z\) are active, then exactly one of \(y'\)
    and \(q\) must be active, otherwise \(\deg_s(y) \in \{1, 3\}\). If \(q\) is
    active, \(\deg_s(y') = 2\) contradicts \(y'\) inactive. If \(y'\) is
    active, this contradicts \(\deg_s(y') = 1\). Thus, exactly one of \(y\) and
    \(z\) is active, say \(y\). Since \(\deg_s(y) \in \{0, 2\}\) with \(z\) and
    \(w\) both inactive, it follows that \(y'\) and \(q\) are either both
    active or both inactive. If both are active, this contradicts \(z\)
    inactive as then \(\deg_s(z) = 2\). If both are inactive, it is
    \(\deg_s(z') = 0\), so \(z'\) is active. This again implies \(\deg_s(z) =
    2\), contradicting \(z\) inactive.
\end{proof}

\begin{proof}[Proof for \(k \geq 2\)]
    Assume towards a contradiction that \(s\) is a PNE of \(H\) with
    \(\sum_{i=1}^\ell s_{x_i} = k + 1\). Analogous to the proof for \(k = 1\),
    we have that either \(w\) is active and both \(y\) and \(z\) are inactive,
    or that \(w\) is inactive and at least one of \(y\) and \(z\) is active. In
    the former case, all vertices in \(Y\) have no active neighbors and are
    active, contradicting \(y\) inactive as \(\deg_s(y) = |Y| + 1 = k + 1\). In
    the latter case, if both \(y\) and \(z\) are active, then all vertices in
    \(Y\) are inactive, contradicting \(y\) active as \(\deg_s(y) = 1\). If
    only \(z\) is active, then all vertices in \(Y\) are active, contradicting
    \(y\) inactive as again \(\deg_s(y) = |Y| + 1\). A symmetric argument rules
    out that only \(y\) is active.
\end{proof}

Next, we show that the \texttt{NEAR-OR} gadget permits a PNE in every other
case.

\begin{lemma}\label{near-or-gadget_permissive}
    Let \(G_\lor = (V_\lor, E_\lor)\) an instance of the \texttt{NEAR-OR}
    gadget, \(G = (V, E)\) a graph with \(V \cap V_\lor = X\), and \(H \defas
    (V \cup V_\lor, E \cup E_\lor)\). Then, if \(G\) admits a PNE \(s\) with
    \(\sum_{i=1}^\ell s_{x_i} \not\in \{0, k + 1\}\), then also \(H\) admits a
    PNE \(t\) with \(t_v = s_v\) for all \(v \in V\).
\end{lemma}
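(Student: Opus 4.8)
The plan is to complete the given PNE $s$ of $G$ to a PNE $t$ of $H$ by assigning suitable strategies to the gadget vertices in $V_\lor \setminus X$. The key structural point is that, since $V \cap V_\lor = X$ and inside the \texttt{NEAR-OR} gadget only the membrane vertex $w$ is adjacent to the operand vertices, passing from $G$ to $H$ gives no vertex of $V \setminus X$ a new neighbour, and gives each operand vertex $x_i$ only the new neighbour $w$. Hence it is enough to choose $t$ so that (i) $t$ agrees with $s$ on $V$, (ii) $w$ is inactive in $t$, which keeps $\deg_t(x_i) = \deg_s(x_i)$ and hence every $x_i$ (and every other vertex of $V$) best-responding, and (iii) every vertex of $V_\lor \setminus X$ is best-responding in $H$. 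Set $a \defas \sum_{i=1}^\ell s_{x_i}$, so $a \geq 1$ and $a \neq k+1$ by hypothesis; I will moreover make $y$ and $z$ inactive, so that $\deg_t(w) = a$ and the requirement $T_a = 0$ for inactive $w$ holds precisely because $a \notin \{0, k+1\}$.

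It then remains to fill in the internal vertices, separately for the two gadget variants, always keeping $w, y, z$ inactive. For $k \geq 2$ I would activate every vertex of $Y$ and of $Z$: each $y_i$ then has active degree $0$ and $T_0 = 1$, while $y$ has active degree $|Y| = k$ and $T_k = 0$ (and symmetrically for $Z$ and $z$), so with $\deg_t(w) = a$ all gadget vertices are best-responding. For $k = 1$, where the gadget is the $3$-sun on triangle $\{y, q, z\}$ with rays $w, y', z'$ and with $X$ attached to $w$, I would activate the triangle vertex $q$ and leave $y', z'$ inactive; a direct check of the six local conditions — $q$ has active degree $0$ (so $T_0 = 1$), each of $y, z, y', z'$ has active degree $1$ (so $T_1 = 0$), and $w$ has active degree $a$ — shows $t$ is a PNE. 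Assembling the pieces gives a PNE $t$ of $H$ with $t_v = s_v$ for all $v \in V$.

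The argument is a finite verification with no deep obstacle; the two things that need care are (a) choosing a configuration that keeps $w$ inactive for \emph{every} admissible value of $a$ rather than only for $a = 1$, which is why it is convenient to fix $y$ and $z$ inactive so that $\deg_t(w)$ equals $a$ exactly, and (b) confirming that the local best-response conditions inside the gadget do not depend on the rest of $H$, which is exactly where $V \cap V_\lor = X$ and the fact that $w$ is the only gadget vertex adjacent to $X$ are used.
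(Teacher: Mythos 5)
Your proposal is correct and constructs exactly the same extension as the paper: for $k=1$ you activate only $q$ and for $k\geq 2$ you activate all of $Y\cup Z$, keeping $w$, $y$, $z$ inactive so that $\deg_t(w)=\sum_i s_{x_i}\not\in\{0,k+1\}$ and no vertex of $V$ changes its active degree. The verification of the local best-response conditions matches the paper's proof step for step.
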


\begin{proof}[Proof for \(k = 1\)]
    We claim that \(t\) with
    \begin{align*}
        t_v \defas
        \begin{cases}
            s_v, & \text{if}~v \in V,                          \\
            1,   & \text{if}~v \in \{q\},                      \\
            0,   & \text{if}~v \in
                   V_\lor \setminus \left(X \cup \{q\}\right). \\
        \end{cases}
    \end{align*}
    is a PNE on \(H\). Since \(t_v = s_v\) and, due to \(t_w = 0\), also
    \(\deg_t(v) = \deg_s(v)\) holds for all \(v \in V\) by construction, it
    remains to show that vertices in \((V \cup V_\lor) \setminus V = V_\lor
    \setminus X\) are best-responding. This is the case as \(\deg_t(q) = 0\)
    and \(q\) is active, \(\deg_t(w) = m \not\in \{0, k + 1\}\) and \(w\) is
    inactive, and for all \(v \in \{y, z, y', z'\}\), \(\deg_t(v) = 1\) and
    \(v\) is inactive.
\end{proof}

\begin{proof}[Proof for \(k \geq 2\)]
    Let \(Q \defas Y \cup Z\). We claim that \(t\) with
    \begin{align*}
        t_v \defas
        \begin{cases}
            s_v, & \text{if}~v \in V,                                      \\
            1,   & \text{if}~v \in Q,                                      \\
            0,   & \text{if}~v \in V_\lor \setminus \left(X \cup Q\right). \\
        \end{cases}
    \end{align*}
    is a PNE on \(H\). Again, we only need to show that vertices in \(V_\lor
    \setminus X\) are best-responding. This is the case as for all \(q \in Q\),
    \(\deg_t(q) = 0\) and \(q\) is active, \(\deg_t(w) = m \not\in \{0, k +
    1\}\) and \(w\) is inactive, and \(\deg_t(y) = \deg_t(z) = k\) and \(y\)
    and \(z\) are both inactive.
\end{proof}

Additionally, we argue that the \texttt{NEAR-OR} gadget is safe.

\begin{lemma}\label{near-or-gadget_safe}
    Let \(G_\lor = (V_\lor, E_\lor)\) an instance of the \texttt{NEAR-OR}
    gadget, \(G = (V, E)\) a graph with \(V \cap V_\lor = X\), and \(H \defas
    (V \cup V_\lor, E \cup E_\lor)\). Then, \(H\) admits no PNE in which \(w\),
    the unique vertex in \(N_{G_\lor}(X)\), is active.
\end{lemma}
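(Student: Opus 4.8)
The plan is to argue by contradiction in the same style as \cref{near-or-gadget_restrictive-0}: assume that $s$ is a PNE of $H$ with $s_w = 1$, so that $\deg_s(w) \in \{0, k+1\}$, and derive a contradiction purely from the behaviour of the gadget's internal vertices. The observation that makes this local is that, since $V \cap V_\lor = X$, every vertex of $V_\lor \setminus X$ has the same neighbourhood in $H$ as in $G_\lor$; in particular the analysis will only use $s_w = 1$ and not the actual value of $\deg_s(w)$, so it does not matter what happens outside the gadget.

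For $k \geq 2$ I would first note that each $y_i \in Y$ has $y$ as its only neighbour, so in any PNE $s_{y_i} = T_{s_y} = 1 - s_y$ (using $T_0 = 1$ and $T_1 = 0$), and symmetrically $s_{z_i} = 1 - s_z$. Since $w$ is an active neighbour of $y$, this gives $\deg_s(y) = 1 + s_z + k(1 - s_y) \geq 1$. If $s_y = 1$, then $\deg_s(y) = 1 + s_z \in \{1, 2\}$, which is neither $0$ nor $k+1$ because $k \geq 2$, so $y$ is not best-responding — a contradiction. Hence $s_y = 0$, in which case $\deg_s(y) = 1 + s_z + k$, and for $y$ to be correctly inactive we need $\deg_s(y) \neq k+1$, forcing $s_z = 1$. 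But then, applying the same reasoning with the roles of $(y, Y)$ and $(z, Z)$ exchanged, $s_z = 1$ yields $\deg_s(z) = 1 \notin \{0, k+1\}$, contradicting that $z$ best-responds. Either branch fails, so $s_w = 0$.

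For $k = 1$ I would work directly with the $3$-sun on $\{w, y, y', q, z', z\}$. Here $y'$ has neighbourhood $\{y, q\}$, so $\deg_s(y') = s_y + s_q$, and because $y'$ has degree $2$ and $T = 1010^\ast$, in a PNE $y'$ is active exactly when $\deg_s(y')$ is even, i.e.\ exactly when $s_y = s_q$; symmetrically $z'$ is active exactly when $s_z = s_q$. Now I would case on $s_q$. If $s_q = 0$, then $s_{y'} = 1 - s_y$ and $s_{z'} = 1 - s_z$, so $\deg_s(q) = (1 - s_y) + s_y + s_z + (1 - s_z) = 2$; since $T_2 = 1$ this forces $q$ active, contradicting $s_q = 0$. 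If $s_q = 1$, then $s_{y'} = s_y$ and $s_{z'} = s_z$, so $\deg_s(q) = 2(s_y + s_z)$ must lie in $\{0, 2\}$, whence at most one of $y, z$ is active; but then $\deg_s(y) = s_w + s_{y'} + s_q + s_z = 2 + s_y + s_z$ (and symmetrically for $z$), and a short check of the two sub-cases $s_y + s_z \in \{0, 1\}$ — using $T_2 = 1$ and $T_3 = 0$ — shows that whichever of $y, z$ carries the relevant degree fails to best-respond. Either way we reach a contradiction, so $s_w = 0$.

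The whole argument is a finite case analysis, so the work is in organising it cleanly rather than in any single computation. The mild subtlety, which I would want to isolate as the first step of each case, is that the strategies of the low-degree vertices ($Y \cup Z$ for $k \geq 2$, and $y', z'$ for $k = 1$) are forced by the strategies of their unique or shared neighbours; this collapses the casework to a size one can enumerate by hand. I expect the $k = 1$ case to be where one must be most careful, since the $3$-sun's internal structure is richer and the determined values of $s_{y'}, s_{z'}$ depend on $s_q$, so the sub-case bookkeeping under $s_q = 1$ is the part most prone to an oversight.
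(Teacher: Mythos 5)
Your proof is correct, but it takes a genuinely different route from the paper's. The paper first invokes the two restrictiveness lemmas (\cref{near-or-gadget_restrictive-0,near-or-gadget_restrictive-kp1}) to conclude that \(m \defas \sum_{i} t_{x_i} \not\in \{0, k+1\}\), uses \(w\)'s own best-response condition to pin down \(\deg_t(w) = k+1\), deduces that at least one of \(y, z\) is active, and then cases on which. You instead never touch \(w\)'s best-response condition or the values \(t_{x_i}\) at all: you propagate forced strategies inward from the pendant vertices (the degree-one vertices in \(Y \cup Z\) for \(k \geq 2\), the degree-two vertices \(y', z'\) for \(k = 1\)), which collapses the case analysis to a split on \(s_y\) (resp.\ on \(s_q\)). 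I checked the arithmetic in both cases — in particular \(\deg_s(q) = 2\) when \(s_q = 0\), \(\deg_s(q) = 2(s_y + s_z)\) and \(\deg_s(y) = 2 + s_y + s_z\) when \(s_q = s_w = 1\), and \(\deg_s(y) = 1 + s_z + k(1 - s_y)\) for \(k \geq 2\) — and it is right. Your version buys two things: it is self-contained (no dependency on the two restrictiveness lemmas, so the lemmas of this subsection could be reordered or proved independently), and it establishes the slightly stronger fact that the gadget's internal vertices admit no mutually best-responding configuration whenever \(s_w = 1\), irrespective of whether \(w\) itself best-responds. The paper's version, in exchange, reuses already-proven structure and keeps the case split aligned with the other gadget proofs.
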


\begin{proof}[Proof for \(k = 1\)]
    Let \(t\) be a PNE of \(H\) and assume towards a contradiction that \(t_w =
    1\). By \cref{near-or-gadget_restrictive-0,near-or-gadget_restrictive-kp1},
    \(\sum_{i=1}^\ell t_{x_i} \asdef m \not\in \{0, k + 1\}\). Since \(w\) is
    active and \(\deg_t(w) \geq m > 0\), it is \(\deg_t(w) = k + 1 = 2\).
    Therefor, at least one of \(y\) and \(z\) must be active, as otherwise
    \(\deg_t(w) = m\). If both \(y\) and \(z\) are active, then neither \(y'\)
    nor \(q\) can be active as otherwise \(\deg_s(y) > 2\). An analogous
    argument rules out that \(z'\) is active. This implies \(\deg(q) = 2\),
    contradicting \(q\) inactive. Thus, exactly one of \(y\) and \(z\) is
    active, say \(y\). If further \(q\) is active, also \(y'\) is active as
    \(\deg_s(y') = 2\) but this contradicts \(y\) active as then \(\deg_s(y) =
    3\). So \(q\) is inactive, implying that \(z'\) is active due to
    \(\deg_s(z') = 0\). If further \(y'\) is inactive, then \(\deg_s(q) = 2\)
    contradicts \(q\) inactive, so also \(y'\) is active, contradicting
    \(\deg_s(y') = 1\).
\end{proof}

\begin{proof}[Proof for \(k \geq 2\)]
    Let \(t\) be a PNE of \(H\) and assume towards a contradiction that
    \(t_w = 1\). By analogy with the proof for \(k = 1\), we have \(\deg_t(w)
    = k + 1 > m\) so that at least one of \(y\) and \(z\) must be active. If
    \(y\) is active, then \(\deg_t(y') = 1\) for all \(y' \in Y\), so all
    vertices in \(Y\) are inactive and \(\deg_t(y) \in \{1, 2\}\). Since \(2 <
    k + 1\), this contradicts \(y\) active. An analogous argument rules out
    that \(z\) is active.
\end{proof}

Finally, we summarize the behavior of the \texttt{NEAR-OR} gadget.

\begin{corollary}\label{near-or-gadget}
    Let \(G_\lor = (V_\lor, E_\lor)\) an instance of the \texttt{NEAR-OR}
    gadget, \(G = (V, E)\) a graph with \(V \cap V_\lor = X\), and \(H \defas
    (V \cup V_\lor, E \cup E_\lor)\). Then:
    \begin{enumerate}
        \item \texttt{NEAR-OR} is permissive: If \(G\) admits a PNE \(s\) with
              \(\sum_{x \in X} s_x \not\in \{0, k + 1\}\), then also \(H\)
              admits a PNE \(t\) with \(t_v = s_v\) for all \(v \in V\).
        \item \texttt{NEAR-OR} is restrictive: If \(H\) admits a PNE \(t\),
              then \(\sum_{x \in X} t_x \not\in \{0, k + 1\}\).
        \item \texttt{NEAR-OR} is safe: In any PNE \(t\) on \(H\) and for all
              \(m \in N_{G_\lor}(x)\), \(t_m = 0\).
    \end{enumerate}
\end{corollary}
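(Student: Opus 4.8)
The plan is to derive the three items directly from the lemmas just established, supplying glue only for the ``restrictive'' part.

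Item~1 (permissive) is, up to rewriting $\sum_{x \in X} s_x = \sum_{i=1}^\ell s_{x_i}$, exactly the conclusion of \cref{near-or-gadget_permissive}, so there is nothing further to prove. Item~3 (safe) follows from \cref{near-or-gadget_safe}: in both \cref{fig:near-or-gadget-1} and \cref{fig:near-or-gadget-k} the membrane vertex $w$ is the unique gadget vertex with an edge to $X$, i.e.\ $N_{G_\lor}(X) = \{w\}$, and \cref{near-or-gadget_safe} states precisely that no PNE of $H$ has $w$ active.

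For item~2 (restrictive), let $t$ be a PNE of $H$. The case $\sum_{x \in X} t_x = k + 1$ is ruled out verbatim by \cref{near-or-gadget_restrictive-kp1}. It remains to rule out $\sum_{x \in X} t_x = 0$, which I would reduce to \cref{near-or-gadget_restrictive-0}. First note that every $v \in V_\lor \setminus X$ is adjacent in $H$ only to vertices of $V_\lor$: since $V \cap V_\lor = X$ we have $v \notin V$, so $v$ has no incident edge in $E$, and every incident edge in $E_\lor$ joins two vertices of $V_\lor$. Let $t'$ be the restriction of $t$ to the vertex set $V_\lor \setminus X$, regarded as a strategy profile of the \texttt{NEAR-OR} gadget with its operand vertices deleted. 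For any $v \in V_\lor \setminus X$, the neighbors of $v$ in this reduced gadget are exactly the $H$-neighbors of $v$ other than operand vertices; since all operand vertices are inactive under $t$, this does not change the active degree, so $\deg_{t'}(v) = \deg_t(v)$. As $v$ is best-responding in the PNE $t$ and its best response depends only on $\deg_t(v)$, it is best-responding under $t'$ as well. Hence $t'$ is a PNE of the \texttt{NEAR-OR} gadget with operand vertices removed, contradicting \cref{near-or-gadget_restrictive-0}. This argument is uniform in $k$, since the two cited lemmas each cover both $k = 1$ and $k \geq 2$.

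The only mildly delicate point I anticipate is this last reduction — the bookkeeping that deleting the (inactive) operand vertices leaves every other vertex's active degree unchanged, which is precisely where the fact that $w$ is the sole gadget vertex adjacent to $X$ enters. Everything else is a direct appeal to the case-by-case lemmas.
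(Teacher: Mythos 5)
Your proof is correct and follows essentially the same route as the paper, which simply cites \cref{near-or-gadget_permissive}, \cref{near-or-gadget_restrictive-0}, \cref{near-or-gadget_restrictive-kp1}, and \cref{near-or-gadget_safe}. The only difference is that you explicitly supply the bookkeeping reducing the case $\sum_{x \in X} t_x = 0$ to the standalone statement of \cref{near-or-gadget_restrictive-0} (deleting the inactive operand vertices preserves every remaining vertex's active degree), a step the paper leaves implicit; your argument there is sound.
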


\begin{proof}
    Permissiveness was shown in \cref{near-or-gadget_permissive},
    restrictiveness is the sum of
    \cref{near-or-gadget_restrictive-0,near-or-gadget_restrictive-kp1}, and
    safety follows from \cref{near-or-gadget_safe}.
\end{proof}

\subsection{The \texttt{FALSE} gadget}

\begin{figure}
    \centering
    \begin{subfigure}[b]{0.4\textwidth}
        \centering
        \begin{tikzpicture}
    \input{tikz_header}

    \node[public, label=above:{$x$}] (x) {};
    \node[private, right=4mm of x, label=above:{$y_1$}] (y_1) {};
    \node[private, right=8mm of y_1, label=above:{$y_k$}] (y_k) {};

    \node[hidden, below=8mm of y_1] (_y_1) {}; \node[box, fit=(y_1) (_y_1),
        label={[rotate=-90, xshift=1.5mm]center:\texttt{TRUE}}] {};

    \node[hidden, below=8mm of y_k] (_y_k) {}; \node[box, fit=(y_k)
        (_y_k), label={[rotate=-90, xshift=1.5mm]center:\texttt{TRUE}}] {};

    \node[hidden, above=6mm of x] (_nw) {};
    \node[hidden, right=0mm of y_k] (_se) {};
    \node[box, fit=(_nw) (_se), label={[yshift=3.5mm]center:\texttt{NEAR-OR}}] {};

    \draw[leftout] (y_1) -- (y_k);
\end{tikzpicture}
        \caption{%
            \texttt{FALSE} gadget: In any PNE \(s\) on a graph defined as for
            \cref{fig:near-or-gadgets}, \(s_x = 0\).
        }%
        \label{fig:false-gadget}
    \end{subfigure}
    \hspace{5mm}
    \begin{subfigure}[b]{0.4\textwidth}
        \centering
        \begin{tikzpicture}
    \input{tikz_header}

    \begin{scope}[node distance=4mm and 8mm]
        \node[membrane, label=above:{$y$}] (y) {};
        \node[public, above left=of y, label=left:{$x_1$}] (x_1) {};
        \node[public, below left=of y, label=left:{$x_2$}] (x_2) {};
        \node[private, above right=of y, label=above:{$z_1$}] (z_1) {};
        \node[private, below right=of y, label=below:{$z_k$}] (z_k) {};
    \end{scope}

    \node[hidden, below=10mm of y] (_y) {}; \node[box, fit=(y) (_y),
        label={[rotate=-90, xshift=1.5mm]center:\texttt{FALSE}}] {};

    \node[hidden, right=8mm of z_1] (_z_1) {};
    \node[box, fit=(z_1) (_z_1), label={[xshift=1.5mm]center:\texttt{TRUE}}]
    {};

    \node[hidden, right=8mm of z_k] (_z_k) {};
    \node[box, fit=(z_k) (_z_k), label={[xshift=1.5mm]center:\texttt{TRUE}}] {};

    \draw (y) -- (x_1);
    \draw (y) -- (x_2);
    \draw (y) -- (z_1);
    \draw (y) -- (z_k);
    \draw[leftout] (z_1) -- (z_k);
\end{tikzpicture}
        \caption{%
            \texttt{EQUIV} gadget: In any PNE \(s\) on a graph defined as for
            \cref{fig:near-or-gadgets}, \(s_{x_1} = s_{x_2}\).
        }%
        \label{fig:equiv-gadget}
    \end{subfigure}
    \caption{%
        \texttt{FALSE} and \texttt{EQUIV} gadgets. Only the name and operand
        vertices of auxiliary gadgets are shown.}%
    \label{fig:false-and-equiv-gadgets}
    \caption*{(a) A graph gadget composed of \texttt{NEAR-OR} gadgets: An outer
        \((k + 1)\)-ary one and \(k\) inner \texttt{TRUE} gadgets with common
        operand vertices. The remaining outer operand vertex is labeled \(x\).
        (b) A \texttt{FALSE} gadget with four vertices attached: \(x_1\),
        \(x_2\), and two operand vertices of additional \texttt{TRUE} gadgets.}
\end{figure}

In a \texttt{POSITIVE-1IN3-SAT} instance, literals are either non-negated
boolean variables or falsity (\(\bot\)). To represent the latter, we introduce
a gadget that forces a vertex to be inactive in any PNE
(\cref{fig:false-gadget}).\@

\begin{lemma}\label{false-gadget}
    Let \(G_\bot = (V_\bot, E_\bot)\) an instance of the \texttt{FALSE} gadget,
    \(G = (V, E)\) a graph with \(V \cap V_\bot = \{x\}\), and \(H \defas (V
    \cup V_\bot, E \cup E_\bot)\). Then:
    \begin{enumerate}
        \item \texttt{FALSE} is permissive: If \(G\) admits a PNE \(s\) with
              \(s_x = 0\), then also \(H\) admits a PNE \(t\) with \(t_v =
              s_v\) for all \(v \in V\).
        \item \texttt{FALSE} is restrictive: If \(H\) admits a PNE \(t\), then
              \(t_x = 0\).
        \item \texttt{FALSE} is safe: In any PNE \(t\) on \(H\) and for all \(m
              \in N_{G_\bot}(x)\), \(t_m = 0\).
    \end{enumerate}
\end{lemma}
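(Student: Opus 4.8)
The plan is to derive all three properties of the \texttt{FALSE} gadget by composing the corresponding properties of its constituents, as collected in \cref{near-or-gadget}. The gadget (\cref{fig:false-gadget}) consists of an outer \texttt{NEAR-OR} gadget \(G_\lor\) with operand set \(X_\lor = \{x, y_1, \dots, y_k\}\) together with \(k\) inner \texttt{TRUE} gadgets \(G_1, \dots, G_k\), where each \(G_i\) is a \texttt{NEAR-OR} gadget of arity one whose sole operand is \(y_i\). The arithmetic fact used throughout is that \(k \geq 1\), so \(k + 1 \geq 2\): restrictiveness of a \texttt{TRUE} gadget forbids its single-operand sum from lying in \(\{0, k+1\}\) and therefore just forces that operand to be active, while if \(x\) is inactive then the operand sum of \(G_\lor\) equals \(0 + k = k \notin \{0, k+1\}\), so the outer gadget imposes no obstruction.

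For restrictiveness, let \(t\) be a PNE of \(H\). Regarding each \(G_i\) as a \texttt{NEAR-OR} instance attached to the graph obtained from \(H\) by deleting the internal (non-operand) vertices of \(G_i\) --- which meets \(V(G_i)\) in exactly \(\{y_i\}\) --- the restrictiveness part of \cref{near-or-gadget} gives \(t_{y_i} \notin \{0, k+1\}\), hence \(t_{y_i} = 1\) for all \(i \in [k]\). Applying it analogously to \(G_\lor\) gives \(t_x + \sum_{i=1}^{k} t_{y_i} = t_x + k \notin \{0, k+1\}\), and since \(t_x + k \geq 1\) this forces \(t_x = 0\). Safety is then immediate: inside \(G_\bot\) the operand vertex \(x\) belongs only to \(G_\lor\), in which an operand is adjacent solely to the membrane vertex \(w\), so \(N_{G_\bot}(x) = \{w\}\); the safety part of \cref{near-or-gadget} yields \(t_w = 0\) in every PNE \(t\) on \(H\).

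For permissiveness, let \(s\) be a PNE of \(G\) with \(s_x = 0\); I construct a PNE of \(H\) by attaching the sub-gadgets one at a time and invoking \cref{near-or-gadget} at each step. First extend \(s\) to the graph \(G^{*}\) obtained from \(G\) by adjoining \(y_1, \dots, y_k\) as isolated active vertices; this is a PNE since an isolated vertex has active degree \(0\) and the picky pattern has \(T_0 = 1\). Attaching \(G_1, \dots, G_k\) in turn --- at each step the current profile has \(y_i = 1 \notin \{0, k+1\}\), so the permissiveness part of \cref{near-or-gadget} applies and preserves the profile on the rest of the graph --- yields a PNE \(s'\) of \(G' \defas G^{*} \cup G_1 \cup \dots \cup G_k\) with \(s'_x = 0\) and \(s'_{y_i} = 1\). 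Finally attaching \(G_\lor\), whose operand sum under \(s'\) is \(0 + k = k \notin \{0, k+1\}\), produces a PNE \(t\) of \(H = G' \cup G_\lor\) that agrees with \(s'\), and hence with \(s\), on \(V\). The one point needing care is the bookkeeping in this last argument: the vertices \(y_i\) serve as operands of both \(G_\lor\) and the \(G_i\), so they must be introduced before any sub-gadget is attached, and one must verify at each composition step that the vertex sets intersect in exactly the operand set and that the \(\{0, k+1\}\)-avoidance hypothesis of \cref{near-or-gadget} holds --- which it does thanks to \(s_x = 0\) and \(k \geq 1\). I expect no genuine obstacle beyond this.
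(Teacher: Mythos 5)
Your proposal is correct and follows essentially the same route as the paper's proof: all three properties are derived from \cref{near-or-gadget} applied to the outer \((k+1)\)-ary \texttt{NEAR-OR} gadget and the \(k\) inner \texttt{TRUE} gadgets, with restrictiveness forcing \(t_{y_i}=1\) and hence \(t_x+k\notin\{0,k+1\}\), and permissiveness obtained by activating the \(y_i\) and extending via the sub-gadgets' permissiveness. Your version merely makes the iterative attachment order explicit, which the paper leaves implicit.
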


\begin{proof}
    \emph{Permissiveness.} Let \(G\) admit a PNE \(s\) with \(s_x = 0\). Then,
    the partial strategy profile \(p\) with \(p_v = s_v\), for all \(v \in V\),
    and \(p_{y_i} = 1\), for all \(i \in [k]\), can be extended to a PNE for
    \(H\) by the permissiveness of the \texttt{NEAR-OR} gadget.

    \emph{Restrictiveness.} Assume towards a contradiction that \(H\)
    admits a PNE \(t\) with \(t_x = 1\). Since \texttt{TRUE} is restrictive,
    \(t_{y_i} = 1\) for all \(i \in [k]\). This contradicts \texttt{NEAR-OR}
    being restrictive, as \(t_x + \sum_{i=1}^k t_{y_i} = k + 1\).

    \emph{Safety.} Follows from the safety of the \texttt{NEAR-OR} gadget as
    \(x\) is identified with an operand vertex of the \((k+1)\)-ary
    \texttt{NEAR-OR} gadget in \(G_\bot\).
\end{proof}

\subsection{The \texttt{EQUIV} gadget}

Next, we introduce a gadget to identify equal variables in distinct clauses
(\cref{fig:equiv-gadget}).

\begin{lemma}\label{equiv-gadget}
    Let \(G_\leftrightarrow = (V_\leftrightarrow, E_\leftrightarrow)\) an
    instance of the \texttt{EQUIV} gadget, \(G = (V, E)\) a graph with \(V \cap
    V_\leftrightarrow = \{x_1, x_2\}\), and \(H \defas (V \cup
    V_\leftrightarrow, E \cup E_\leftrightarrow)\). Then:
    \begin{enumerate}
        \item \texttt{EQUIV} is permissive: If \(G\) admits a PNE \(s\) with
              \(s_{x_1} = s_{x_2}\), then also \(H\) admits a PNE \(t\) with
              \(t_v = s_v\) for all \(v \in V\).
        \item \texttt{EQUIV} is restrictive: If \(H\) admits a PNE \(t\), then
              \(t_{x_1} = t_{x_2}\).
        \item \texttt{EQUIV} is safe: In any PNE \(t\) on \(H\) and for all \(m
              \in N_{G_\bot}(\{x_1, x_2\})\), \(t_m = 0\).
    \end{enumerate}
\end{lemma}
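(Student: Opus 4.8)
The plan is to reason about the central membrane vertex \(y\): first use the guarantees of the embedded \texttt{FALSE} and \texttt{TRUE} gadgets to pin down a handful of forced strategy values, and then reduce the whole statement to a short count of the active degree of \(y\). Concretely, in any PNE \(t\) of \(H\) the vertex \(y\) is an operand vertex of the \texttt{FALSE} gadget, so \(t_y = 0\) by restrictiveness of \texttt{FALSE} (\cref{false-gadget}) and the \texttt{FALSE}-membrane neighbours of \(y\) are inactive by its safety; likewise each \(z_i \in Z\) is an operand vertex of a \texttt{TRUE} gadget, so \(t_{z_i} = 1\) by restrictiveness of \texttt{TRUE} (the \texttt{NEAR-OR} gadget with \(\ell = 1\), \cref{near-or-gadget}) and the \texttt{TRUE}-membrane neighbour of each \(z_i\) is inactive by safety. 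Since \(y \notin V\), all neighbours of \(y\) lie inside the gadget, so \(\deg_t(y) = t_{x_1} + t_{x_2} + |Z| = k + t_{x_1} + t_{x_2}\). As \(y\) is inactive and best-responding, \(T_{\deg_t(y)} = 0\), i.e.\ \(\deg_t(y) \notin \{0, k + 1\}\); the bound \(\deg_t(y) \geq k \geq 1\) rules out \(\deg_t(y) = 0\) automatically, and \(\deg_t(y) \neq k + 1\) is exactly \(t_{x_1} + t_{x_2} \neq 1\), that is \(t_{x_1} = t_{x_2}\). This gives restrictiveness, and safety follows at once because \(y\) is the only gadget vertex adjacent to \(\{x_1, x_2\}\) and \(t_y = 0\).

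For permissiveness, let \(s\) be a PNE of \(G\) with \(s_{x_1} = s_{x_2}\) and define \(p\) by \(p_v \defas s_v\) for \(v \in V\), \(p_y \defas 0\), and \(p_{z_i} \defas 1\) for all \(i\). I would first verify that \(p\) is a PNE on the union of \(G\) with the core of the gadget (the vertices \(\{y\} \cup Z\) and the edges incident to \(y\)): vertices of \(V\) keep best-responding because their only new neighbour, \(y\), is inactive; \(\deg_p(y) = k + s_{x_1} + s_{x_2} \in \{k, k + 2\}\) avoids \(\{0, k + 1\}\) since \(k \geq 1\), so \(y\) correctly plays \(0\); and \(\deg_p(z_i) = p_y = 0\) with \(T_0 = 1\), so each \(z_i\) correctly plays \(1\). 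Then I extend \(p\) to a PNE \(t\) of \(H\) by invoking permissiveness of the \texttt{FALSE} gadget at \(y\) (where \(p_y = 0\)) and of each \texttt{TRUE} gadget at \(z_i\) (where \(p_{z_i} = 1 \notin \{0, k + 1\}\)); the auxiliary gadgets are pairwise vertex-disjoint apart from their shared operand vertices, so these extensions do not interact and \(t_v = s_v\) for all \(v \in V\).

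The main obstacle is less any single computation than the bookkeeping of the gadget composition: one has to check that, as a subgraph of \(H\), every auxiliary gadget meets the rest of the graph only in its operand vertices, so that the \texttt{FALSE} and \texttt{TRUE} lemmas apply verbatim, and that the active \texttt{FALSE}-membrane vertices of \(y\) really contribute \(0\) to \(\deg_t(y)\). The accompanying arithmetic point is that \(|Z|\) is deliberately chosen to equal \(k\) so that the case ``\(x_1\) and \(x_2\) disagree'' lands on the forbidden active degree \(k + 1\) of \(y\); tracking this offset, together with the small-\(k\) corner cases, is what makes the count come out right.
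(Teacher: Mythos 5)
Your proof is correct and follows essentially the same route as the paper's: the identical assignment $p_y = 0$, $p_{z_i} = 1$ verified on the gadget core and then extended via permissiveness of the embedded \texttt{TRUE} and \texttt{FALSE} gadgets, and the same degree count $\deg_t(y) = k + t_{x_1} + t_{x_2}$ forcing $t_{x_1} = t_{x_2}$ (the paper phrases this as a contradiction from $t_{x_1} \neq t_{x_2}$ giving $\deg_t(y) = k+1$). Your explicit remark that the \texttt{FALSE}-internal neighbours of $y$ contribute nothing to $\deg_t(y)$ by safety is a detail the paper leaves implicit, but otherwise the arguments coincide.
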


\begin{proof}
    \emph{Permissiveness.} Let \(G\) admit a PNE \(s\) with \(s_{x_1} =
    s_{x_2}\). We first show that the strategy profile \(p\) with
    \begin{align*}
        p_v \defas
        \begin{cases}
            s_v, & \text{if}~v \in V,                      \\
            0,   & \text{if}~v = y,                        \\
            1,   & \text{if}~v \in \{z_i \mid i \in [k]\}, \\
        \end{cases}
    \end{align*}
    is a PNE in the graph \(G'\) that is obtained by removing all vertices in
    \(V_\leftrightarrow \setminus \big(\{x_1, x_2, y\} \cup \{z_i \mid i \in
    [k]\}\big)\) from \(H\). Since \(p_y = 0\), all \(v \in V\) are
    best-responding in \(p\), so it remains to show that also \(y\) and \(z_i\)
    for \(i \in [k]\) are best-responding. This is the case as for all \(i \in
    [k]\), \(\deg_p(z_i) = 0\) and \(z_i\) is active, while \(\deg_p(y) \in
    \{k, k + 2\}\) with \(k \geq 1\) and \(y\) is inactive. Since \(p\) is a
    PNE on \(G'\), it follows from the permissiveness of the \texttt{TRUE} and
    \texttt{FALSE} gadgets that also \(H\) admits a PNE \(t\) with \(t_v = p_v
    = s_v\) for all \(v \in V\).

    \emph{Restrictiveness.} Assume towards a contradiction that \(H\) admits a
    PNE \(t\) with \(t_{x_1} \neq t_{x_2}\). Without loss of generality let
    \(t_{x_1} = 1\) and \(t_{x_2} = 0\). By the restrictiveness of the
    \texttt{TRUE} and \texttt{FALSE} gadgets, \(t_y = 0\) and \(t_{z_i} = 1\)
    for all \(i \in [k]\). Thus, \(\deg_t(y) = k + 1\), contradicting \(y\)
    inactive.

    \emph{Safety.} It is \(N_{G_\bot}(\{x_1, x_2\}) = \{y\}\) and, in any PNE
    \(t\) on \(H\), \(t_y = 0\) as \texttt{FALSE} is restrictive.
\end{proof}

\subsection{The \texttt{CLAUSE} gadget}

\begin{figure}
    \centering
    \begin{subfigure}[b]{0.4\textwidth}
        \centering
        \begin{tikzpicture}
    \input{tikz_header}

    \begin{scope}[node distance=7mm]
        \node[public, label=left:{$t_1$}] (t_1) {};
        \node[public, below=of t_1, label=left:{$t_2$}] (t_2) {};
        \node[public, below=of t_2, label=left:{$t_3$}] (t_3) {};
    \end{scope}

    \begin{scope}[node distance=3mm and 10mm]
        \node[membrane, below right=of t_1, label=left:{$x_1$}] (x_1) {};
        \node[membrane, below right=of t_2, label=left:{$y_1$}] (y_1) {};
    \end{scope}

    \begin{scope}[node distance=30mm]
        \node[membrane, right=of t_2, label=left:{$z_1$}] (z_1) {};
    \end{scope}

    \begin{scope}[node distance=3mm]
        \node[membrane, right=of x_1, label=right:{$x_3$}] (x_3) {};
        \node[membrane, right=of y_1, label=right:{$y_3$}] (y_3) {};
        \node[membrane, right=of z_1, label=right:{$z_3$}] (z_3) {};
    \end{scope}

    \node[hidden, left=7mm of t_1] (_nw) {};
    \node[box, fit=(_nw) (t_3), label={[rotate=90, yshift=4mm]center:\texttt{NEAR-OR}}] {};

    \draw (t_1) to[out=0, in=135] (x_1) to[out=225, in=0] (t_2);
    \draw (t_1) to[out=0, in=135] (x_3) to[out=225, in=0] (t_2);
    \draw (t_2) to[out=0, in=135] (y_1) to[out=225, in=0] (t_3);
    \draw (t_2) to[out=0, in=135] (y_3) to[out=225, in=0] (t_3);
    \draw (t_1) to[out=5, in=150] (z_1) to[out=210, in=-5] (t_3);
    \draw (t_1) to[out=5, in=150] (z_3) to[out=210, in=-5] (t_3);

    \draw[leftout] (x_1) -- (x_3);
    \draw[leftout] (y_1) -- (y_3);
    \draw[leftout] (z_1) -- (z_3);

    \node[group, fit=(x_1) (x_3)] {};
    \node[group, fit=(y_1) (y_3)] {};
    \node[group, fit=(z_1) (z_3)] {};
\end{tikzpicture}
        \caption{\texttt{CLAUSE} gadget for \(k = 1\).}%
        \label{fig:clause-gadget-1}
    \end{subfigure}
    \begin{subfigure}[b]{0.4\textwidth}
        \centering
        \begin{tikzpicture}
    \input{tikz_header}

    \begin{scope}[node distance=10mm]
        \node[public, label=above:{$t_1$}] (t_1) {};
        \node[public, below left=of t_1, label=left:{$t_2$}] (t_2) {};
        \node[public, below right=of t_1, label=right:{$t_3$}] (t_3) {};
    \end{scope}

    \draw (t_1) -- (t_2) -- (t_3) -- (t_1);
\end{tikzpicture}
        \caption{\texttt{CLAUSE} gadget for \(k \geq 2\).}%
        \label{fig:clause-gadget-k}
    \end{subfigure}
    \caption{%
        \texttt{CLAUSE} gadget: Admits three symmetric PNE with \(s_{t_1} +
        s_{t_2} + s_{t_3} = 1\).
    }%
    \label{fig:clause-gadgets}
    \caption*{(a) A \texttt{NEAR-OR} gadget whose operand vertices \(t_1\),
        \(t_2\), and \(t_3\) are pairwise connected by three parallel paths
        with one inner vertex each. (b) A triangle with the same operand vertex
        labels.}
\end{figure}

Finally, we represent each clause of a \texttt{POSITIVE-1IN3-SAT} instance by a
\texttt{CLAUSE} gadget (\cref{fig:clause-gadgets}). Here, we do not require the
familiar trio of properties. Instead, the gadget is designed to admit three
symmetric PNE, each of which has one distinct vertex from \(\{t_1, t_2, t_3\}\)
in active state. We prove a slightly weaker claim, which is sufficient for the
reduction.

\begin{lemma}\label{clause-gadget}
    Let \(G\) be an instance of the \texttt{CLAUSE} gadget. Then,
    \begin{itemize}
        \item for every \(t \in \{t_1, t_2, t_3\}\), \(G\) admits a PNE \(s\)
              with \(s_t = 1\), and
        \item in any PNE \(s\) on \(G\), \(s_{t_1} + s_{t_2} + s_{t_3} = 1\).
    \end{itemize}
\end{lemma}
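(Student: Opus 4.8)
The plan is to handle the cases $k = 1$ and $k \ge 2$ separately, and in both to exploit the evident $S_3$-symmetry of the \texttt{CLAUSE} gadget that permutes $t_1, t_2, t_3$; this reduces the first bullet to exhibiting a single PNE, say one with $s_{t_1} = 1$.

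For $k \ge 2$ the gadget is just the triangle on $\{t_1, t_2, t_3\}$, and the pattern $T = 10^k 10^*$ satisfies $T_0 = 1$ and $T_1 = T_2 = 0$. I would run a short case distinction on the set of active vertices: the all-inactive profile fails because $T_0 = 1$; any active vertex with an active neighbour has active degree $1$ or $2$ and is not best-responding since $T_1 = T_2 = 0$, which rules out two or three active vertices; and each of the three profiles with exactly one active vertex is checked directly (the active vertex has active degree $0$, the other two have active degree $1$). This settles both bullets at once for $k \ge 2$.

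For $k = 1$ the pattern is $T = 1010^*$, so $T_0 = T_2 = 1$ while $T_\ell = 0$ for $\ell = 1$ and for all $\ell \ge 3$, and the gadget is a \texttt{NEAR-OR} gadget with operand vertices $X = \{t_1, t_2, t_3\}$ together with, for each pair $\{t_i, t_j\}$, three length-two paths joining $t_i$ and $t_j$ through fresh degree-two vertices. Let $G'$ be the subgraph spanned by $t_1, t_2, t_3$ and these nine path vertices; this is precisely the external graph to which \cref{near-or-gadget} applies, attached at $X$. For the second bullet I would take an arbitrary PNE $s$ of the gadget: restrictiveness of \texttt{NEAR-OR} gives $s_{t_1} + s_{t_2} + s_{t_3} \notin \{0, 2\}$, so the sum is $1$ or $3$, and it remains to exclude $3$. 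If all three $t_i$ were active, every path vertex would have active degree $2$ and hence be active since $T_2 = 1$; together with safety of \texttt{NEAR-OR} (the unique neighbour $w$ of $X$ inside the sub-gadget is inactive in every PNE) this forces $\deg_s(t_1) = 6$, contradicting $T_6 = 0$; hence the sum is $1$. For the first bullet I would, by symmetry, fix $t_1$ and put on $G'$ the profile in which $t_1$ is active, $t_2$ and $t_3$ are inactive, the three path vertices between $t_2$ and $t_3$ are active, and the remaining six path vertices are inactive. A direct check shows this is a PNE of $G'$ on its own ($t_1$ and the active path vertices see active degree $0$; the six inactive path vertices see active degree $1$; $t_2$ and $t_3$ see active degree $3$). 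Since $s_{t_1} + s_{t_2} + s_{t_3} = 1 \notin \{0, 2\}$, permissiveness of \texttt{NEAR-OR} (\cref{near-or-gadget}) extends this to a PNE of the full \texttt{CLAUSE} gadget agreeing with it on $G'$, in particular with $t_1$ active.

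The only real obstacle is the bookkeeping around the \texttt{NEAR-OR} sub-gadget in the $k = 1$ case: one has to recognise $G'$ as the external graph in \cref{near-or-gadget}, verify that $G'$ \emph{alone} (without the sub-gadget) is a PNE when invoking permissiveness, and remember to use safety of \texttt{NEAR-OR} to pin down the status of $w$ when counting $\deg_s(t_1)$ in the restrictiveness direction. Everything else is routine verification of best-response conditions against the two patterns $1010^*$ and $10^k10^*$.
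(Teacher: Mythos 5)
Your proof is correct and follows essentially the same route as the paper: a direct case analysis on the triangle for \(k \geq 2\), and for \(k = 1\) the combination of \texttt{NEAR-OR} restrictiveness and permissiveness with an explicit profile that activates exactly the three path vertices joining the two inactive operand vertices. If anything, your bookkeeping in the restrictiveness direction is the cleaner one: restrictiveness of \texttt{NEAR-OR} excludes the sums \(0\) and \(k+1 = 2\), so the case needing the direct \emph{path vertices become active and overload an operand} argument is the sum \(3\), exactly as you allocate it (the paper's write-up assigns these roles the other way around, though its direct argument in fact covers both cases).
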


\begin{proof}
    The case of \(k \geq 2\) is trivial: Exactly the profiles in which exactly
    one of the three vertices is active are PNE.\@ Let thus \(k = 1\) in the
    following.

    We show the first claim only for \(t = t_2\) as the other cases are
    analogous. We claim that the partial strategy profile \(s\) with
    \begin{align*}
        s_v \defas
        \begin{cases}
            1, & \text{if}~v \in \{t_2, \; z_1, z_2, z_3\},        \\
            0, & \text{if}~v \in
                 \{t_1, t_3, \; x_1, x_2, x_3, \; y_1, y_2, y_3\}, \\
        \end{cases}
    \end{align*}
    can be extended to a PNE on \(G\). We first argue that \(s\) is a PNE on
    the graph obtained by removing all non-operand vertices of the
    \texttt{NEAR-OR} gadget from \(G\). On this graph it is \(\deg_s(t_2) = 0\)
    and \(t_2\) is active while \(\deg_s(t_1) = \deg_s(t_3) = 3\) and both
    \(t_1\) and \(t_3\) are inactive. Further, \(\deg_s(z) = 0\) and \(z\) is
    active for every \(z \in \{z_1, z_2, z_3\}\) while \(\deg_s(v) = 1\) and
    \(v\) is inactive for all \(v \in \{x_1, x_2, x_3, \; y_1, y_2, y_3\}\).
    Since \(t_1 + t_2 + t_3 = 1 \not\in \{0, k + 1\}\), it follows from
    \cref{near-or-gadget_permissive} that \(s\) can be extended to a PNE on
    \(G\).

    Next, let \(s\) be any PNE on \(G\) and assume towards a contradiction that
    \(\ell \defas s_{t_1} + s_{t_2} + s_{t_3} \neq 1\). The cases of \(\ell =
    0\) and \(\ell = 3\) are ruled out by the restrictiveness of the
    \texttt{NEAR-OR} gadget, so it remains to rule out \(\ell = 2\). Without
    loss of generality let \(s_{t_1} = s_{t_2} = 1\) and \(s_{t_3} = 0\); the
    other cases are analogous. Then, \(\deg_s(x_i) = 2\) and \(x_i\) is active
    for all \(i \in [3]\). Thus, \(\deg_s(t_1) \geq 3\), contradicting \(t_1\)
    active.
\end{proof}

\subsection{Reduction}

With our assortment of gadgets, we can prove the main result of this section,
which answers the second open question posed in~\cite{Gilboa22}:

\begin{theorem}\label{picky-pattern}
    The homogeneous binary public goods game equilibrium decision problem on
    undirected graphs with best-response pattern \(T \in 10^+10^*\) is
    \textsf{NP}-complete.
\end{theorem}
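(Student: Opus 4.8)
The plan is to prove membership in \textsf{NP} directly and \textsf{NP}-hardness by a polynomial-time reduction from \texttt{POSITIVE-1IN3-SAT}, which is \textsf{NP}-hard~\cite{Garey79}. Fix $k \geq 1$ and let $T = 10^k10^*$. Membership is immediate: a candidate PNE is a strategy profile $s \in \{0,1\}^n$, and verifying it requires only computing each active degree $\deg_s(v)$ and checking $s_v = T_{\deg_s(v)}$, which is polynomial since $T$ is fixed. For hardness, given an instance with clauses $I = \{\{l^i_1, l^i_2, l^i_3\} : i \in [\ell]\}$ over variables $X = \{\xi_1, \dots, \xi_m\}$, I would build a graph with one \emph{literal vertex} $t^i_j$ per slot $(i,j) \in [\ell] \times [3]$. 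For each clause $i$, attach a \texttt{CLAUSE} gadget whose operand vertices are exactly $t^i_1, t^i_2, t^i_3$. For each slot with $l^i_j = \bot$, attach a \texttt{FALSE} gadget with operand vertex $t^i_j$. For each variable $\xi_p$, enumerate the slots $(i_1,j_1), \dots, (i_r, j_r)$ in which $\xi_p$ occurs and attach, for every $q \in [r-1]$, an \texttt{EQUIV} gadget with operand vertices $t^{i_q}_{j_q}$ and $t^{i_{q+1}}_{j_{q+1}}$. Each gadget has size $O(1)$ (as $k$ is a constant) and there are $O(\ell + m)$ of them, so the reduction runs in polynomial time.

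For the forward direction, suppose $\alpha \colon X \to \{0,1\}$ satisfies $\Phi$. Set $s_{t^i_j} \defas 1$ exactly when $l^i_j = \xi_p$ with $\alpha(\xi_p) = 1$; since $\alpha$ meets the $1$-in-$3$ requirement of clause $i$, exactly one of $t^i_1, t^i_2, t^i_3$ is active. Starting from the vertex-disjoint union of the \texttt{CLAUSE} gadgets, pick in each the PNE furnished by \cref{clause-gadget} whose active operand matches this choice; the union of these profiles is a PNE of the union graph. Then add the \texttt{FALSE} and \texttt{EQUIV} gadgets one at a time and apply the permissiveness parts of \cref{false-gadget,equiv-gadget}: the preconditions hold because a $\bot$-slot vertex carries value $0$ and two literal vertices of the same variable carry the common value $\alpha(\xi_p)$, and these values are preserved by every later extension. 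The result is a PNE of the constructed graph.

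For the backward direction, let $s$ be a PNE of the constructed graph $H$. The crucial step is to argue that the restriction of $s$ to the vertex set of any single \texttt{CLAUSE} gadget is itself a PNE of that gadget in isolation: its non-operand vertices have no edges leaving the gadget, and the only vertices of $H$ outside the gadget that are adjacent to one of its operand (literal) vertices are membrane vertices of the \texttt{FALSE} and \texttt{EQUIV} gadgets attached there, which are inactive in $s$ by the safety parts of \cref{false-gadget,equiv-gadget}; hence every \texttt{CLAUSE}-gadget vertex has the same active degree in $H$ as in the gadget alone. Now \cref{clause-gadget} gives exactly one active literal vertex per clause, \cref{false-gadget} forces every $\bot$-slot vertex inactive, and \cref{equiv-gadget} forces all literal vertices of a variable $\xi_p$ to share a value $\alpha(\xi_p)$ (chosen arbitrarily if $\xi_p$ does not occur). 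A literal is true under $\alpha$ exactly when its literal vertex is active, so every clause has exactly one true literal and $\alpha \models \Phi$.

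The step I expect to be the main obstacle is the adjacency bookkeeping that makes the gadgets compose, which is precisely what the safety property is designed to support (cf.\ \cref{safety-lemma}). One must check that every gadget touches the rest of the graph only through its operand vertices---in particular that the \texttt{CLAUSE} gadget, which unlike the others is not safe, is nonetheless only ever attached via $t^i_1, t^i_2, t^i_3$---and that, because the attached safe gadgets keep their membrane vertices inactive, a literal vertex sees the same active neighbourhood whether or not those gadgets are present. Granted this, the forward direction follows by iterated permissiveness and the backward direction by restricting to one \texttt{CLAUSE} gadget at a time.
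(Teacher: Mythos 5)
Your proposal is correct and follows essentially the same reduction as the paper: a \texttt{CLAUSE} gadget per clause, \texttt{FALSE} gadgets on $\bot$-slots, \texttt{EQUIV} gadgets identifying occurrences of the same variable, with permissiveness driving the forward direction and safety plus restrictiveness driving the backward direction. The only (immaterial) difference is that you chain \texttt{EQUIV} gadgets along consecutive occurrences of each variable, whereas the paper attaches one for every pair of occurrences; both are polynomial and both yield the needed equality via transitivity of the restrictiveness property.
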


\begin{proof}
    Containment in \textsf{NP} is obvious. For \textsf{NP}-hardness, we
    describe a polynomial-time reduction from \texttt{POSITIVE-1IN3-SAT}. In
    the following, we consider the best-response pattern \(T = 10^k10^*\) for a
    fixed \(k \geq 1\). To ease notation, we write \(s(v)\) instead of \(s_v\)
    to denote the strategy of a vertex \(v\) in a strategy profile \(s\).

    Let \(I \defas \left\{ \{l^i_1, l^i_2, l^i_3\} \mid i \in [\ell] \right\}\)
    with \(l^i_j \in X \cup \{\bot\}\) for all \((i, j) \in [\ell] \times
    [3]\), \(\ell \in \mathbb{Z}_{\geq 1}\), and \(X = \{\xi_1, \ldots,
    \xi_m\}\) a set of boolean variables, be an instance of
    \texttt{POSITIVE-1IN3-SAT}. Recall that \(I\) is a \emph{yes}-instance if
    and only if the formula
    \begin{align*}
        \Phi(X) \defas \bigwedge_{i \in [\ell]} \left(
        \left( l^i_1 \lor l^i_2 \lor l^i_3 \right)
        \land \lnot \left( l^i_1 \land l^i_2 \right)
        \land \lnot \left( l^i_2 \land l^i_3 \right)
        \land \lnot \left( l^i_3 \land l^i_1 \right)
        \right)
    \end{align*}
    is satisfiable. We construct an instance \(G = (V, E)\) of the binary
    public goods game that has a PNE if and only if this is the case. Starting
    from the empty graph, we introduce a disjoint \texttt{CLAUSE} gadget \(C^i
    = (V^i, E^i)\) for every \(\{l^i_1, l^i_2, l^i_3\} \in I\), whose vertices
    we relabel with a superscript \(i\). We call the resulting graph \(G'\).
    Note that for some \(i \neq i' \in [\ell]\) and \(j, j' \in [3]\), it may
    be the case that \(l^i_j = l^{i'}_{j'} \in X\) while \(t^i_j \neq
    t^{i'}_{j'}\) are disjoint vertices. For every such quadruple \((i, j, i',
    j')\), we add an \texttt{EQUIV} gadget on fresh non-operand vertices whose
    operand vertices are identified with \(t^i_j\) and \(t^{i'}_{j'}\). We call
    the graph at this point \(G''\). Next, for every \((i, j) \in [\ell] \times
    [3]\) with \(l^i_j = \bot\), we add a \texttt{FALSE} gadget on fresh
    non-operand vertices whose operand vertex we identify with \(t^i_j\). This
    yields the graph \(G\). Clearly, the number of vertices added is polynomial
    in the number of clauses \(\ell\), so that this construction can be
    accomplished in time polynomial in the size of \(I\). In the following we
    show decision equivalence.

    Let first \(I\) be a \emph{yes}-instance. Then, there is a truth assignment
    \(\sigma \colon X \to \{0, 1\}\) satisfying \(\Phi\). We claim that the
    partial strategy assignment \(s^0\) with
    \begin{align*}
        s^0(t^i_j) \defas
        \begin{cases}
            \sigma(\xi), & \text{if}~l^i_j = \xi \in X, \\
            0,           & \text{if}~l^i_j = \bot,      \\
        \end{cases}
    \end{align*}
    for all \((i, j) \in [\ell] \times [3]\) can be extended to a PNE on \(G\).
    To this end consider first the \texttt{CLAUSE} gadgets and the associated
    subgraph \(G'\) of \(G\). Since \(\Phi\) is satisfied, we have \(s^0(t^i_1)
    + s^0(t^i_2) + s^0(t^i_3) = 1\) for every \(i \in [\ell]\). By
    \cref{clause-gadget}, it follows that \(G'\) admits a PNE \(s'\) with
    \(s'(t^i_j) = s^0(t^i_j)\) for all \((i, j) \in [\ell] \times [3]\).
    Consider next the \texttt{EQUIV} gadgets and the associated subgraph
    \(G''\) of \(G\). Let \(t^i_j\) and \(t^{i'}_{j'}\) be the operand vertices
    of an \texttt{EQUIV} gadget. Then, \(l^i_j = l^{i'}_{j'} \in X\) by
    construction so that \(s'(t^i_j) = s^0(t^i_j) = s^0(t^{i'}_{j'}) =
    s'(t^{i'}_{j'})\) by definition of \(s'\) and \(s^0\). From the
    permissiveness of the \texttt{EQUIV} gadgets (applied iteratively), it
    follows that \(G''\) admits a PNE \(s''\) with \(s''(t^i_j) = s'(t^i_j) =
    s^0(t^i_j)\) for all \((i, j) \in [\ell] \times [3]\). Consider next the
    \texttt{FALSE} gadgets in \(G\). Let \(t^i_j\) be the operand vertex of
    such a gadget. Then, by construction, \(l^i_j = \bot\) and thus
    \(s''(t^i_j) = s^0(t^i_j) = 0\). By the permissiveness of the
    \texttt{FALSE} gadgets (applied iteratively), we have that \(G\) admits a
    PNE as required.

    Let next \(G\) admit a PNE \(s\). We show that the truth assignment
    \(\sigma \colon X \to \{0, 1\}\) given by \(\sigma(l^i_j) \defas s(t^i_j)\)
    for all \((i, j) \in [\ell] \times [3]\) with \(l^i_j \in X\) satisfies
    \(\Phi\). First, we show that \(\sigma\) is well-defined. To this end
    assume towards a contradiction that there are \((i, j) \neq (i', j') \in
    [\ell] \times [3]\) such that \(l^i_j = l^{i'}_{j'}\) but \(s(t^i_j) \neq
    s(t^{i'}_{j'})\). By construction, there is an \texttt{EQUIV} gadget in
    \(G\) with operand vertices \(t^i_j\) and \(t^{i'}_{j'}\), whose
    restrictiveness contradicts \(s\) being a PNE.\@ Next, we show that
    \(\Phi\) is satisfied. Let \(i \in [\ell]\) index a clause \(\{l^i_1,
    l^i_2, l^i_3\} \in I\) and consider the associated \texttt{CLAUSE} gadget
    \(C^i\) with operand vertices \(V^i = \{t^i_1, t^i_2, t^i_3\}\). Since
    \texttt{EQUIV} and \texttt{FALSE} are safe gadgets, implying \(s(v) = 0\)
    for all \(v \in N_G(V^i)\), it follows that \(s\) limited to \(V^i\) is a
    PNE on \(G\). By \cref{clause-gadget}, this implies \(s(t^i_1) + s(t^i_2) +
    s(t^i_3) = 1\). Without loss of generality, assume that \(s(t^i_1) = 1\)
    and \(s(t^i_2) = s(t^i_3) = 0\). We establish that \(l^i_1 \in X\). To this
    end assume towards a contradiction that \(l^i_1 = \bot\). Then, by
    construction, there is a \texttt{FALSE} gadget in \(G\) whose operand
    vertex is \(t^i_1\). Since \texttt{FALSE} is restrictive, \(s(t^i_1) = 1\)
    contradicts \(s\) being a PNE.\@ Since \(l^i_1 \in X\), we have
    \(\sigma(l^i_1) = s(t^i_1) = 1\), so \(\left( l^i_1 \lor l^i_2 \lor l^i_3
    \right)\) is satisfied by \(\sigma\). It remains to show that both
    \(l^i_2\) and \(l^i_3\) are \emph{false} under \(\sigma\). We do so for
    \(l^i_2\); the argument for \(l^i_3\) is analogous. The case of \(l^i_2 =
    \bot\) is clear, so let \(l^i_2 \in X\). Then, \(\sigma(l^i_2) = s(t^i_2) =
    0\). It follows that also \(\lnot \left( l^i_1 \land l^i_2 \right) \land
    \lnot \left( l^i_2 \land l^i_3 \right) \land \lnot \left( l^i_3 \land l^i_1
    \right)\) and, by extension, \(\Phi\) is satisfied by \(\sigma\).
\end{proof}

In~\cite{Gilboa22} it was shown that any \textsf{NP}-hard pattern starting with
a \(1\) remains \textsf{NP}-hard when it is prefixed with the sequence \(10\).
This lets us identify another natural family of patterns that is
\textsf{NP}-hard to decide, that of all \emph{truncated alternating} sequences:

\begin{corollary}\label{truncated-alternating}
    The homogeneous binary public goods game equilibrium decision problem on
    undirected graphs with best-response pattern \(T \in {(10)}^+10^*\) is
    $\mathsf{NP}$-complete.
\end{corollary}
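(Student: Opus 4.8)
The plan is to derive \cref{truncated-alternating} from \cref{picky-pattern} by an induction on the number of leading ``\(10\)'' blocks of the pattern, using as the only additional ingredient the prefixing reduction of \cite[Theorem 7]{Gilboa22}, which states that prefixing an \textsf{NP}-hard best-response pattern that starts with \(1\) by the sequence \(10\) again produces an \textsf{NP}-hard pattern. Membership in \textsf{NP} is immediate: given a strategy profile, one checks in polynomial time whether every vertex is best-responding, so only \textsf{NP}-hardness needs an argument.

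For the base case I would note that \(1010^*\) is exactly the picky pattern of \cref{picky-pattern} with \(k = 1\), whose homogeneous equilibrium decision problem is therefore \textsf{NP}-hard. It is worth being explicit that the induction must start from one \(10\)-block and not from zero: the pattern \(10^*\) is monotone, and by \cref{monotone} its decision problem lies in \textsf{P}, so nothing could be bootstrapped from it.

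For the inductive step, I would assume the decision problem for \({(10)}^{j-1}10^*\) is \textsf{NP}-hard for some \(j \geq 2\). Since \({(10)}^j10^* = 10 \cdot {(10)}^{j-1}10^*\) and the pattern \({(10)}^{j-1}10^*\) begins with \(1\), applying \cite[Theorem 7]{Gilboa22} shows that the decision problem for \({(10)}^j10^*\) is \textsf{NP}-hard as well; as every \(T \in {(10)}^+10^*\) has the form \({(10)}^j10^*\) for some \(j \geq 1\), this finishes the induction and hence the proof. I do not anticipate any real obstacle here: the argument is essentially bookkeeping around two black boxes, and the only points that genuinely need care are verifying that \(1010^*\) is indeed an instance of \cref{picky-pattern} (namely the \(k = 1\) case) and that each intermediate pattern \({(10)}^{j-1}10^*\) still starts with \(1\), which is exactly the hypothesis required to reapply \cite[Theorem 7]{Gilboa22}.
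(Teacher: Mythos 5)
Your proposal is correct and matches the paper's intended argument exactly: the paper likewise obtains the base case \(1010^*\) as the \(k=1\) instance of \cref{picky-pattern} and then iterates the \(10\)-prefixing result of \cite[Theorem 7]{Gilboa22} to cover all of \({(10)}^+10^*\). Your explicit induction and the remark about why one cannot start from \(10^*\) are just a more careful write-up of the same reasoning.
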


What makes this family interesting is that its ``limit case'', the alternating
sequence \(T = {(10)}^*\), is polynomial-time decidable already for the more
general case of directed graphs~\cite{Papadimitriou21}.

\section{Conclusions}

We studied equilibria of the binary public goods game from the perspective of
computational complexity. We have resolved two open questions posed by
\citet{Gilboa22} that concern the best-response patterns \(1^k0^*\) for \(k
\geq 3\) and \(10^k10^*\) for \(k \geq 1\). For the former family, we
discovered a connection to congestion games, which guarantees the existence of
equilibria and yields a straightforward polynomial time algorithm to compute
one: any sequence of better responses will converge to an equilibrium after at
most \(O(n^2)\) steps. While this holds already for the inhomogeneous case
where each player may follow a different pattern from this family, the problem
becomes \(\mathsf{PLS}\)-complete when we consider in addition links of varying
strength. For the latter family of patterns, we proved that it is
\(\mathsf{NP}\)-hard to decide whether an equilibrium exists. The special case
of \(1010^*\) together with a result in~\cite{Gilboa22} shows that the family
of truncated alternating patterns, \({(10)}^+10^*\), also induces a hard
problem. This complements nicely a positive result for \({(10)}^*\) given
in~\cite{Papadimitriou21}. In concert with concurrent work by~\citet{Gilboa23},
these results settle the computational complexity of computing equilibria for
the homogeneous binary public goods game on general undirected graphs when
players share a finite best-response pattern.

\section*{Acknowledgements}

We are grateful to three anonymous reviewers for helpful comments and
suggestions. We further thank the organizers and participants of the workshop
\emph{Computational Social Dynamics} (Dagstuhl Seminar 22452) for fruitful
discussions. In particular, we thank Noam Nisan for bringing this problem to
our attention.

\bibliographystyle{abbrvnat}
\bibliography{literature}

\end{document}